\long\def\comment#1{}
\newfont{\bbb}{msbm10 scaled 700}
\newfont{\bb}{msbm10 scaled 1100}
\newcommand{\CC}{\mbox{\bb C}}
\newcommand{\PP}{\mbox{\bb P}}
\newcommand{\RR}{\mbox{\bb R}}
\newcommand{\EE}{\mbox{\bb E}}
\newcommand{\av}{{\bf a}}
\newcommand{\bv}{{\bf b}}
\newcommand{\ev}{{\bf e}}
\newcommand{\qv}{{\bf q}}
\newcommand{\rv}{{\bf r}}
\newcommand{\sv}{{\bf s}}
\newcommand{\wv}{{\bf w}}
\newcommand{\vv}{{\bf v}}
\newcommand{\xv}{{\bf x}}
\newcommand{\yv}{{\bf y}}
\newcommand{\zv}{{\bf z}}
\newcommand{\zerov}{{\bf 0}}
\newcommand{\onev}{{\bf 1}}
\newcommand{\Am}{{\bf A}}
\newcommand{\Cm}{{\bf C}}
\newcommand{\Em}{{\bf E}}
\newcommand{\Qm}{{\bf Q}}
\newcommand{\Sm}{{\bf S}}
\newcommand{\Wm}{{\bf W}}
\newcommand{\Xm}{{\bf X}}
\newcommand{\Ac}{{\cal A}}
\newcommand{\Cc}{{\cal C}}
\newcommand{\Ec}{{\cal E}}
\newcommand{\Kc}{{\cal K}}
\newcommand{\Lc}{{\cal L}}
\newcommand{\Nc}{{\cal N}}
\newcommand{\Pc}{{\cal P}}
\newcommand{\Rc}{{\cal R}}
\newcommand{\Sc}{{\cal S}}
\newcommand{\lambdav}{\hbox{\boldmath$\lambda$}}
\newcommand{\omegav}{\hbox{\boldmath$\omega$}}
\renewcommand{\arg}{{\hbox{arg}}}
\newcommand{\SNR}{{\sf snr}}
\newcommand{\eqdef}{\stackrel{\Delta}{=}}
\newcommand{\transp}{{\sf T}}
\newtheorem{theorem}{Theorem}
\newtheorem{lemma}{Lemma}
\newtheorem{proof}{Proof}
\begin{document}

\title{Energy-Delay Tradeoff and Dynamic Sleep Switching 
for Bluetooth-Like Body-Area Sensor Networks}

\author{Eric Rebeiz,~\IEEEmembership{Student Member,~IEEE},
Giuseppe Caire,~\IEEEmembership{Fellow,~IEEE},\\
    and Andreas F. Molisch,~\IEEEmembership{Fellow,~IEEE}
\thanks{E. Rebeiz is with the Department of Electrical Engineering, University of California Los Angeles, Los Angeles CA  (e-mail:  rebeiz@ee.ucla.edu).}
\thanks{G. Caire and A. F. Molisch are with the Department of Electrical Engineering,
University of Southern California, Los Angeles, CA (e-mail: caire, molisch@usc.edu).}
}

\maketitle

\begin{abstract}
Wireless technology enables novel approaches to healthcare, in particular the remote monitoring 
of vital signs and other parameters indicative of people's health. 
This paper considers a system scenario relevant to such applications, where a smart-phone acts as a data-collecting hub, 
gathering data from a number of wireless-capable body sensors, 
and relaying them to a healthcare provider host through standard existing cellular networks. 
Delay of critical data and sensors' energy efficiency are both relevant and conflicting issues.
Therefore, it is important to operate the wireless body-area sensor network at some desired point 
close to the optimal energy-delay  tradeoff curve.  
This tradeoff curve is a function of the employed physical-layer protocol: in particular,
it depends on the multiple-access scheme and on the coding and modulation schemes available. 
In this work, we consider a protocol closely inspired by the widely-used Bluetooth standard.  
First, we consider the calculation of the minimum energy function, i.e., the minimum sum energy per symbol
that guarantees the stability of all transmission queues in the network.  
Then, we apply the general theory developed by Neely to develop a dynamic scheduling policy 
that approaches the optimal energy-delay tradeoff for the network at hand. 
Finally, we examine the queue dynamics and propose a novel policy that adaptively switches between connected and disconnected (sleeping) 
modes.   We demonstrate that the proposed policy can achieve significant gains in the realistic case where the 
control ``NULL'' packets necessary to maintain the connection alive, have a non-zero energy cost, and the data arrival statistics corresponding 
to the sensed physical process are bursty. 
\end{abstract}

\newpage

\section{Introduction} \label{intro}

Wireless Body Area Networks (WBAN) allow wireless connectivity among devices operating within very close proximity to the human body 
(see \cite{kohno2008,surveywban} and references therein). 
A commonly proposed architecture for WBAN considers sensors and/or actuators with limited power and buffering capacity, 
communicating with a body-wearable ``local hub'' node with significantly larger power, buffering and 
processing capability.  
The local hub acts as the WBAN controller, data fusion center, and gateway for external connectivity.
Smartphones are well suited to serve as such hubs because of their large storage capacity and processing power. 
Furthermore, most people already live with one of these phones in their pocket almost continuously, 
and naturally recharge their batteries for reasons that go beyond medical monitoring. 
The connection from the hub to the outside world poses no challenges since it makes use of a 
Wireless Local Area Network (WLAN) or 3G/LTE cellular data connection, either of which is widely available and provides more 
than sufficient data rate for the purpose of health monitoring  \cite{Molisch_2005_book}. 
Therefore, this paper focuses on the WBAN segment of the system, comprised of the  sensors and the local hub. 
In particular, since the local hub has a large battery and  it is easily recharged, we focus on the 
energy-delay tradeoff of the communication {\em from sensors to the hub}, although the WBAN protocol is bi-directional 
and involves the transmission of control packets also from the hub to the sensors as well.  
 
The sensing signals of interest for healthcare applications require data rates that range from some tens of 
bit/s to  hundreds of kbit/s, as shown in Table \ref{table} (see \cite{body2}).
The ``arrival process'' modeling the sensing operations may range from very impulsive to deterministic. 
For example, an ElectroEncephaloGraphy (EEG) signal triggered by an epileptic seizure may be silent for a long time and then suddenly produce a very large amount  of data within a short time period. In contrast, an ElectroCardioGraphy (ECG) signal for monitoring of 
physical activity in obese children \cite{knowme} produces a steady data rate. 

Given the random, possibly impulsive, nature of the sensor-generated data, the sensor node must be equipped 
with a transmission queue for local buffering. 
The rate and power allocated by the scheduler to the sensor-to-hub link  (uplink) must ensure 
stability of the queue  to avoid data loss resulting from buffer overflow. 
On the other hand, the energy efficiency of the sensor nodes is critically important since the sensor's battery may not be easily replaced 
or recharged.  Minimum energy expenditure subject to queue stability yields the so-called 
{\em minimum energy function}  \cite{Neely-IT07} of the system, defined as the infimum of
the sum of the sensors' average transmit energy per symbol,\footnote{In this paper we refer to ``energy'' to indicate average transmit energy per 
symbol, which corresponds to the transmit power, up to a fixed proportionality factor that depends on the signaling baud rate (symbols per second).}  
subject to stability of all queues. 
The transmission delay can be reduced by spending more than the minimum average energy per symbol.
The problem of optimal energy-delay tradeoff  was addressed by Berry and Gallager \cite{Berry-Gallager-IT02} for the single-user case 
and significantly extended by Neely \cite{Neely-IT07} to the $K \geq 1$ users case. 
In our case, we remark that smartphones have a built-in WBAN capability for short-range communication with other devices, currently based 
on the Bluetooth protocol \cite{bluetooth} (a possible rival protocol, IEEE 802.15.6, is in the process of being 
standardized \cite{802_15_6_draft}). Therefore, it is likely that the first generation of such WBAN applications will exploit Bluetooth. 

The main contribution of this paper consists of the application of the theoretical framework of \cite{Neely-IT07} to a
simplified, but realistic, ``Bluetooth-like'' protocol comprising of a finite number of possible physical layer (PHY) 
coding and modulation schemes (referred to in the following as {\em PHY modes}), 
each of which is characterized by a fixed data rate and given Packet Error Rate (PER) function of the received SNR. 
We start by addressing the calculation of the minimum energy function. For this problem, we provide an
exact solution method based on Lagrangian duality in the case where the PER functions, for each PHY mode, 
are indicator functions (PER equal to 0 for received SNR above some mode-dependent threshold, 
and equal to 1 otherwise). In the general case of smooth PER functions,  
we propose an achievable upper bound and a lower bound on the minimum energy function. The upper bound is based on
finding an appropriate feasible energy allocation policy, while the lower bound is obtained by finding a piecewise linear
upper bound on the effective rate function resulting from the actual (packet error-prone) PHY modes.
Next, we apply the general dynamic scheduling policy of \cite{Neely-IT07} to the case of a discrete set 
of non-ideal error-prone PHY modes,  and we provide a near-optimal scheduling rule that operates close  
to the optimum energy-delay tradeoff.  Finally, we observe that in the Bluetooth protocol the polled sensor needs to send NULL packets even when there 
is no data to transmit, in order to maintain the connection alive. In practice, these packets require some non-zero energy cost. 
Therefore,  it may be convenient to release the connection and switch to sleeping mode in the case of a very bursty arrival process. 
Based on this observation, we develop a novel improved policy that switches to sleeping mode whenever the power necessary 
to release the connection  and re-establish it at a later time is smaller than the conditionally expected power required 
to keep the connection alive, given the current  system state.  This new policy goes beyond the framework of  \cite{Neely-IT07}, and
provides significant power savings for practically relevant system parameters. 


The paper is organized as follows. Section \ref{bluetooth-like} defines a protocol closely inspired by Bluetooth, 
with some necessary simplifications in order to obtain an analytically tractable, yet significant, problem. 
In Section \ref{K-user}, we consider the general case of a piconet comprising $K$ sensors and a hub. 
The calculation of the minimum energy function is treated in Section \ref{min-pow-K}. 
In Section \ref{opt-scheduling}, 
we apply the  near-optimal  dynamic scheduling framework of \cite{Neely-IT07} to our setting, and
in Section \ref{switch} we discuss the behavior of the scheduling algorithm in the 
case of bursty arrivals, provide a new dynamic sleep switching policy and demonstrate through numerical examples 
its gain over the conventional policy that does not switch to sleep mode.
Conclusions are pointed out in Section \ref{conclusions}.

\section{A Bluetooth-like protocol} \label{bluetooth-like}

The PHY of the enhanced Bluetooth protocol \cite{bluetooth}  supports 3 data rates. 
The {\em basic rate} corresponds to a 1Mbps raw data rate using GFSK modulation \cite{Molisch_2005_book}. The {\em enhanced rate} 
modes use the $\pi/4$-DQPSK and 8-DPSK \cite{Molisch_2005_book}, with Gray bit-labeling, to achieve 2Mbps and 3Mbps of raw data rates. 
Since the signal bandwidth is approximately $1$ MHz, these correspond to spectral efficiencies 
$R_1 = 1, R_2 = 2$ and $R_3 = 3$ bit/s/Hz, respectively. 
The Bit Error Rates (BERs) of each of the modulation schemes in Additive White Gaussian Noise (AWGN) are given by the well-known
exact or tightly approximated expressions below \cite{millerdqpsk,proakis-book}:
\begin{eqnarray}
\epsilon_1(\SNR) &=& Q_{1}(a,b) - \frac{1}{2} \exp\left (-\frac{a^2+b^2}{2}\right )I_{0}(ab) \nonumber\\
\epsilon_2(\SNR) & \simeq & Q\left ( \sqrt{\SNR(2-\sqrt{2})} \right ) \nonumber\\
\epsilon_3(\SNR) & \simeq & \frac{2}{3}Q \left ( \sqrt{\SNR} \left (\sqrt{1+\sin(\pi/8)} - \sqrt{1-\sin(\pi/8)} \right ) \right ) ,
\end{eqnarray}
where $\epsilon_i(\SNR)$ denotes the BER for modulation scheme $i$ as a function of the {\em received} 
Signal-to-Noise Ratio (SNR), denoted by $\SNR$,  $Q_1(a,b)$ and $I_0(ab)$ denote the Marcum function and 
the zero-order modified Bessel function of the first kind respectively \cite{proakis-book}, and $a$ and $b$ are given by
\begin{equation} \label{ab}
a, b = \sqrt{\frac{\SNR}{2}\left (1\pm\sqrt{1-\left (\frac{\sin(2\pi h)}{2\pi h} \right )^2}\right ) }, 
\end{equation}
with $h=0.29$. 

Data packets consist of an Access Code (AC) that identifies the piconet, a Header (H) with various control information, 
and payload Data (D). AC and H are always transmitted using the basic rate, while D can be transmitted using any 
of the above modulations. As implemented in most commercial chipsets, we consider hard detection. 
Successful reception of the payload data can only occur after AC and H have been decoded successfully. 
For a given $\SNR$, constant over the transmission of a packet, the probability of successful decoding of the payload data is given by $P_{\rm D}(\SNR) = P_{{\rm A}_s}(\SNR) P_{{\rm H}_s}(\SNR) P_{{\rm D}_s}(\SNR)$, i.e., by the product of the individual probability of success of the three detection phases.

The AC is composed of 72 bits, 64 of which form a synchronization word with minimum Hamming 
distance $d_H=14$ between ACs of different piconets. 
If the Hamming distance between the detected and the expected synchronization words is not larger than 
the correlator margin $\rho$, 
the correlator at the receiving node triggers the reception of the packet; otherwise the packet is discarded. Thus, 
\begin{equation}
P_{{\rm A}_s}(\SNR) = \sum_{k=0}^{\rho}{{64 \choose k} \epsilon_1(\SNR)^k(1-\epsilon_1(\SNR))^{64-k}}.
\end{equation}
Notice that $\rho$ is not specified by the standard and can be set by the manufacturer.  
In our simulations, we assumed $\rho=6$ \cite{Zanella-bt-efficiency}.  

The header is composed of 18 bits, each encoded by a $(3,1)$ repetition code. 
The probability of a header being successfully decoded is given by
\begin{equation}
P_{{\rm H}_s}(\SNR) =  \left ( (1-\epsilon_1(\SNR))^3+3\epsilon_1(\SNR)(1-\epsilon_1(\SNR))^2 \right )^{18}.
\end{equation}
The header also contains a checksum for error detection, used to verify header integrity after decoding. The checksum is designed 
so that the probability of undetected error is sufficiently small, such that we can safely assumed that decoding errors can be revealed with probability 1.

The payload can either be encoded using a rate 2/3 extended Hamming code or be transmitted uncoded. 
In particular, the enhanced rates  are uncoded so that
\begin{equation}
P_{{\rm D}_s}(\SNR) = \left ( 1-\epsilon_i(\SNR) \right )^{B_i}, 
\end{equation}
where $i$ is the chosen rate, and $B_i$ is the number of payload  bits, as given in Table \ref{PacketSlots}. 
The payload can be transmitted over 1, 3, or 5 time slots.  When more than one time slot is allocated, the protocol 
overhead (AC and H) is only  included in the first time slot. Therefore, a larger number of slots yields lower relative protocol 
overhead, but also a larger probability of packet error. 

When a node has nothing to send, but the connection to the piconet is to be maintained, ``NULL'' control packets are used. These 
have AC and H as described above, but contain no payload. In this case, the probability of successful NULL packet detection is
simply given by $P_{\rm NULL}(\SNR) = P_{{\rm A}_s}(\SNR) P_{{\rm H}_s}(\SNR)$. The minimum received SNR required to stay 
connected to the piconet shall be denoted by $\SNR_0$. For example, $\SNR_0 = 8$ dB yields $P_{\rm NULL}(\SNR_0) \approx 0.95$.

It is also interesting to notice that the basic rate is not competitive with respect to the two enhanced rates, due to the poor BER performance of 
non-coherent GFSK. Therefore, in our results we shall consider only enhanced data rates for the payload. 
Fig. \ref{PERfig} shows  probability of successful packet detection vs. the received SNR for the 
enhanced rate modes for different payload lengths. 

Bluetooth supports both Asynchronous (ACL) and Synchronous (SCO) types of data transfer. 
Correspondingly, different logical link control and transport layers have been defined in the system architecture specification. 
In the system considered here, the hub acts as the master of the Bluetooth piconet 
and polls each sensor according to some scheduling scheme, to be investigated later. 
The sensors (slaves) respond with an information packet on the next slot(s). 
ACL transfer supports asymmetric transmission rates, flexibility of non-periodic polling, 
re-transmission and the ability to maintain piconets with several slaves. Therefore, 
ACL is better suited for the application in mind.  When a device has an active ACL link, it is said to be in 
the {\em connected state},  and can communicate with the master whenever polled.

In this work we assume that a connected slave device responds to polls from the master by sending data packets at given 
power and rate, according to the PHY layer options described before, or NULL control packets if the slave has nothing to send; 
we emphasize that NULL packets must be received correctly with sufficiently high probability; 
otherwise the network loses synchronism. 
While previous theoretical work \cite{Berry-Gallager-IT02,Neely-IT07} assumed that there is no power cost when no data are to be transmitted, 
the NULL packet transmission does incur a power cost -- a fact that we shall specifically take into account in this paper. 
As an alternative to NULL packet transmission, a device can be disconnected from the network if it has nothing to send for 
a long time. In this case, an (implementation-specific) power cost is incurred in the 
disconnection and later inquiry, paging and re-connection operations. In our simulations  
we have considered a reconnection cost equal to some integer multiple $\tau \geq 1$ of the cost of sending a 
NULL packet,  assuming that the reconnection operation requires $\tau$ NULL control packets to be sent by the sensor
(see Section \ref{switch}).


\section{Problem set-up} \label{K-user}

Consider the system shown in Fig.~\ref{system1}, with $K$ transmitters (sensors) and one receiver (the local hub). 
Time is divided into slots, each of which comprises $N  \approx W_s T_s$ complex symbols (channel uses), where 
$W_s$ is the slot bandwidth and $T_s$ is the slot duration  \cite[Ch. 8]{gallager}.  Rates are expressed in bit/s/Hz.

The complex discrete-time base-band equivalent channel model for sensor $k$ transmission is given by 
\begin{equation} \label{phy-channel}
\yv(t) = \sum_{k=1}^K \sqrt{S_k(t)} \xv_k(t) + \zv(t),
\end{equation}
where $\yv(t), \xv_k(t), \zv(t) \in \CC^N$ denote the received signal, the transmit codeword 
and the AWGN in slot $t$, and where $S_k(t)$ denotes the channel fading state, assumed to be constant in each slot, 
and changing randomly from slot to slot according to some i.i.d. process. 
This block-fading channel model is motivated by the fact that, in typical WBANs, 
the coherence time and coherence bandwidth are of the order of $10$ ms and 
$10$ MHz \cite{Molisch_et_al_2006_AP,Stability-of-narrowband,802_15_6_draft},  
and thus they are respectively larger than the packet duration and of the signal bandwidth. 
Furthermore, a Bluetooth-like protocol makes use of slow frequency hopping \cite{bluetooth}, 
for which different packets transmitted at sufficiently far apart frequencies 
undergo independent channel states.~\footnote{Notice that hopping is done on a per-packet basis, not a per-timeslot basis.} 
The average transmit energy per symbol of sensor $k$ in slot $t$ is defined as
\begin{equation}
E_k(t) = \frac{1}{N} \EE[\| \xv_k(t) \|^2 ],
\end{equation}
and the AWGN is normalized so that its components are i.i.d.  $\sim \Cc\Nc(0,1)$ 
(complex circularly symmetric Gaussian with unit variance and zero mean).
As in Bluetooth, we consider Time-Division Multiple Access (TDMA) where the hub schedules 
a single transmitter to be active over  any slot $t$. Hence, in the channel model (\ref{phy-channel}) the sum over $k$ contains effectively only 
one term for $k = k(t)$,  where $k(t)$ denotes the index of the scheduled transmitter at slot time $t$. 
Correspondingly, $E_k(t) = 0$ for all $k \neq k(t)$. 

For the scheduled sensor $k(t)$, we let $\ell(t)$ and $E_{k(t)}(t)$ denote the 
PHY mode and the corresponding transmit energy chosen by the scheduling policy, respectively. 
Consistent with the Bluetooth-like protocol outlined in Section \ref{bluetooth-like}, 
we assume that a finite number $L \geq 1$ of possible PHY modes are available, where each mode 
$\ell$ is characterized by rate $R_\ell$ and a probability of successful packet detection, function of the 
received SNR,  denoted by  $P_\ell(\SNR)$.  For example, in the case of Bluetooth, we have $P_\ell(\SNR) = 
P_{{\rm D}}(\SNR)$  for the corresponding enhanced rate mode and packet length. 
The received SNR for sensor $k(t)$ is given by the product  $\SNR_{k(t)} = E_{k(t)}(t) S_{k(t)}(t)$. 
Also, we assume that the PHY has an {\em idle} mode denoted by $\ell = 0$, with transmission rate $R_0 = 0$ and minimum 
received SNR equal to some $\SNR_0 \geq 0$, that depends on the reliability with which
the NULL control packets need to be detected.  Without loss of generality, we assume $R_0 = 0 < R_1 < \cdots  < R_L$.  
Letting $\SNR_0 = 0$, we can consider the ideal case where NULL slots do not incur any power cost, 
as considered in \cite{Berry-Gallager-IT02,Neely-IT07}. 

We define the conditional successful decoding event $\Ac(\ell, e, s)$ for 
PHY mode $\ell$ with transmit energy $e$ and
channel fading state $s$ as:
\begin{equation} \label{successful} 
\Ac(\ell, e, s) = 
\left \{ \mbox{successful decoding at slot $t$} | \ell(t) = \ell,  E_{k(t)}(t) = e, S_{k(t)}(t) = s \right \}.
\end{equation}
By definition, we have $P_\ell(es) = \EE[ 1\{ \Ac(\ell, e, s) \} ]$.~\footnote{$1\{\Ac\}$ denotes the indicator function of an event 
$\Ac$ in the underlying probability space.}
The data measured by the $k$-th sensor form an i.i.d. arrival process $A_k(t)$, with arrival rate $\EE[A_k(t)] = \lambda_k$ bit/s/Hz. Arrival processes are assumed to be bounded with probability 1, i.e., there exists some constant $A_{\max}$ such that $\PP(A_k(t) \in [0, A_{\max}]) = 1$ for all $k$. 
Each sensor $k$ has a transmission queue with backlog $Q_k(t)$, also expressed in bit/s/Hz after a suitable normalization. 
We assume that some error detection and logical link control mechanism 
keeps the information bits in the transmit queue for later transmission if a packet error occurs.
Then, the transmission queues evolve according to the stochastic difference equations 
\begin{equation} \label{buffer-evo}
Q_k(t+1) = \left [ Q_k(t) -  R_{\ell(t)} \times 1\left \{ \Ac(\ell(t), E_k(t), S_k(t)) \right \} \times 1\{ k(t) = k \} \right ]_+ + A_k(t), 
\end{equation}
for $k = 1,\ldots, K$,  where $[ x]_+ \eqdef \max\{x, 0\}$.

By Little's theorem \cite{gallager-bertsekas}, the average delay for the system in Fig.~\ref{system1} 
is given by $\overline{D}_k = \overline{Q}_k/\lambda_k$, where 
$\overline{Q}_k = \lim_{t \rightarrow \infty} \frac{1}{t} \sum_{\tau=0}^{t-1} \EE[Q_k(\tau)]$ is the limit of the average 
queue buffer size.  When $\overline{Q}_k  < \infty$ for all $k$, the queues are said to be {\em strongly stable} 
\cite{Georgiadis-Neely-Tassiulas-2004}.  Under our system assumptions,  a necessary and sufficient condition for strong stability is
that the average service rate for each sensor $k$ must be larger than its arrival rate, i.e., 
\begin{equation} \label{strong-stab}
\liminf_{t \rightarrow \infty} \frac{1}{t} \sum_{\tau =0}^{t-1} \EE \left [ R_{\ell(\tau)} \times 
1\left \{ \Ac(\ell(\tau), E_k(\tau), S_k(\tau)) \right \} \times 1\{ k(\tau) = k\} \right ] > \lambda_k, \;\;\; \forall \;\; k = 1,\ldots, K. 
\end{equation}
A scheduling policy achieving finite $\overline{Q} = \frac{1}{K} \sum_{k=1}^K \overline{Q}_k$ (i.e., finite average delay) 
is referred  to as a {\em stability policy}, and the closure of the convex hull of all arrival rate vectors $\lambdav = (\lambda_1, \ldots, \lambda_K)$ such that there exists a stability policy is referred to as the {\em stability region}, and it is indicated here 
by $\Lambda$, consistent with the notation of \cite{Neely-IT06,Neely-IT07}.\footnote{This is referred to as ``capacity region'' in \cite{Neely-IT06}, but we prefer to reserve this term to indicate  the Shannon capacity region of a multi-terminal network, which may or may not coincide with the stability region, depending on the assumptions.}
 
The average transmit energy per symbol for a policy $\pi(t) = (k(t), \ell(t), E_{k(t)}(t))$ is  given by
\begin{equation} \label{average-pow}
\overline{E} = \lim_{t \rightarrow \infty} \frac{1}{t} \sum_{\tau=0}^{t-1} \sum_{k=1}^K \EE \left [ E_k(\tau) \times 1\{ k(\tau) = k\} \right ]. 
\end{equation}
The minimum energy function, indicated by $\Phi(\lambdav)$ consistent with \cite{Neely-IT06,Neely-IT07}, 
is defined as the minimum average transmit energy per symbol required for stability under 
the arrival rate vector $\lambdav \in \Lambda$. Using \cite[Theorem 1]{Neely-IT06}, which can be applied in 
our case almost verbatim,  we have that $\Phi(\lambdav)$ is achieved by a randomized stationary policy, i.e., by a time-invariant 
random mapping of the channel state vector $\Sm(t) = (S_1(t), \ldots, S_K(t))$ into a scheduling decision. 
Formally, such a policy is defined 
by the conditional probability distribution
\begin{equation} \label{policy1} 
P_{\Kc_\pi, \Lc_\pi, \Ec_\pi |\Sm}(k,\ell, e | \sv)  =  \PP(\Kc_\pi = k, \Lc_\pi = \ell, \Ec_\pi = e | \Sm = \sv), 
\end{equation}
where $\Sm$ denotes the channel state vector, 
and $\Kc_\pi, \Lc_\pi$ and $\Ec_\pi$ denote random variables
defined over $\{1, \ldots, K\}$, $\{0, \ldots, L\}$ and $\RR_+$ (the set of non-negative reals), indicating the 
index of the scheduled sensor, the index of the selected PHY mode, and the corresponding transmit energy, respectively. 
Such a policy works as follows: at given slot time $t$, sensor $k(t)$, is scheduled 
with PHY mode $\ell(t)$ and transmits with energy $E_{k(t)}(t)$ with probability
$P_{\Kc_\pi, \Lc_\pi, \Ec_\pi |\Sm}(k(t),\ell(t), E_{k(t)}(t) | \Sm(t))$.   
In Section \ref{min-pow-K} we discuss the calculation of $\Phi(\lambdav)$ for the system at hand. 

More generally, the energy-delay tradeoff region of the system for given arrival rates $\lambdav$ 
is defined as the set of points  $(\overline{E}, \overline{D})$ in the energy-delay plane for which there exists 
a scheduling  policy $\pi$ that achieves an average delay $\frac{1}{K} \sum_{k=1}^K \overline{D}_k \leq \overline{D}$ 
with average energy per symbol equal to $\overline{E}$. 
The Pareto boundary of this region is the {\em optimal energy-delay tradeoff} \cite{Berry-Gallager-IT02,Neely-IT07}. 
In general, the optimal energy-delay tradeoff is  a non-increasing convex curve with the horizontal asymptote 
$\overline{E} \downarrow \Phi(\lambdav)$ as $\overline{D} \rightarrow \infty$. 
It is known \cite{Berry-Gallager-IT02,Neely-IT07} that scheduling policies that depend on the channel state only (i.e., which disregard the 
queue buffer states) cannot achieve optimal energy-delay tradeoff.  Therefore, as in \cite{Neely-IT07}, we are interested in the class of randomized stationary 
policies $\pi$,  mapping the channel state vector $\Sm(t)$ and the buffer state vector $\Qm(t) = (Q_1(t), \ldots, Q_K(t))$ into a 
scheduling decision. Formally, such a policy is  defined by the conditional probability distribution
\begin{equation} \label{policy2} 
P_{\Kc_\pi, \Lc_\pi, \Ec_\pi |\Sm, \Qm}(k,\ell, e | \sv,\qv)  =  \PP(\Kc_\pi = k, \Lc_\pi = \ell, \Ec_\pi = e | \Sm = \sv,  \Qm = \qv), 
\end{equation}
where all random variables are as defined above, and  $\Qm$ denotes the buffer state vector. 
Such a policy works as follows: at given slot time $t$, sensor $k(t)$, is scheduled 
with PHY mode $\ell(t)$ and transmits with energy $E_{k(t)}(t)$ with probability
$P_{\Kc_\pi, \Lc_\pi, \Ec_\pi |\Sm, \Qm}(k(t),\ell(t), E_{k(t)}(t) | \Sm(t),\Qm(t))$.   
In Section \ref{opt-scheduling} we apply the {\em dynamic scheduling} framework of \cite{Neely-IT07} 
to the system at hand and obtain a family of policies that can operate closely to any desired point of the optimal energy-delay tradeoff curve 
by appropriately tuning  a control parameter.

\section{Minimum average energy subject to stability constraints} \label{min-pow-K}

Let $\Sm = (S_1, \ldots, S_K)$ denote a random vector with the same first order marginal probability cumulative distribution function (cdf) 
$F_S(\sv)$ of the channel  state $\Sm(t)$.  As anticipated before, an underlying system assumption is that the network must be designed 
in order to guarantee successful reception of the NULL packets with sufficiently high probability. 
In practice, in the case of protocol errors, some paging and synchronization mechanism allows to restart 
operations.~\footnote{For example,  anyone using a Bluetooth wireless mouse has had the experience of occasional ``mouse disconnected'' error messages appearing on the  laptop screen.}  
However, for simplicity, protocol errors are not taken into account and $\SNR_0$ is fixed such that the $P_{\rm NULL}(\SNR_0)$ is sufficiently large.  
In particular, in order to guarantee $\SNR_{k(t)} \geq \SNR_0 > 0$ with probability 1, we assume that $F_S(\cdot)$ is such 
that $\EE[1/S_k] < \infty$ for all $k$. 
More details on the channel state cdf used in our numerical results, relevant for the WBAN scenario, is given at the end of this section. For the rest, 
the minimum energy calculation discussed here is general, and holds for any continuous, discrete or mixed-type \cite{grimmett} fading distribution. 


Finding $\Phi(\lambdav)$ consists of solving for the conditional probability distribution (\ref{policy1}) such that
the corresponding randomized stationary policy minimizes $\overline{E}$ subject to stability of all queues. 
It is useful to separate the discrete and continuous parts of the conditional probability distribution (\ref{policy1}) as
\begin{equation}
P_{\Kc_\pi, \Lc_\pi, \Ec_\pi |\Sm}(k,\ell, e | \sv) = p_{\Ec_\pi |\Kc_\pi, \Lc_\pi, \Sm}( e | k,\ell,\sv) P_{\Kc_\pi, \Lc_\pi |\Sm}(k,\ell | \sv).
\end{equation}
The discrete conditional probability distribution $P_{\Kc_\pi, \Lc_\pi |\Sm}(k,\ell | \sv)$ is represented by a 
matrix-valued function $\Wm(\sv) \in [0,1]^{K \times (L+1)}$ with $(k,\ell)$ element $w_{k,\ell}(\sv)$, such that
\begin{equation}
P_{\Kc_\pi, \Lc_\pi |\Sm}(k,\ell | \sv) = w_{k,\ell}(\sv), \;\;\;\; \forall \;\; \sv \in \RR_+^K. 
\end{equation}
The continuous part corresponds to the energy allocation random variable 
$\Ec_{k,\ell}(\sv)$, conditionally distributed according to $p_{\Ec_\pi |\Kc_\pi, \Lc_\pi, \Sm}( e | k,\ell,\sv)$. 
Using the stability condition (\ref{strong-stab}), $\Phi(\lambdav)$ is given by the  value of the optimization problem
\begin{eqnarray} 
\mbox{minimize} & &  \EE \left [ \sum_{k,\ell} w_{k,\ell}(\Sm) \Ec_{k, \ell}(\Sm)  \right ] \label{K-user-Phi-smooth} \\
\mbox{subject to} & & \EE \left [ \sum_{\ell} w_{k,\ell}(\Sm) R_\ell \times 1\{ \Ac(\ell, \Ec_{k,\ell}(\Sm), S_k) \}   \right ] \geq \lambda_k, \;\;\; \forall \; k = 1,\ldots, K \nonumber \\
& & \onev^\transp \Wm(\sv) \onev = 1, \;\;\; \forall \;\; \sv \in \RR_+^K \nonumber \\
& & \PP( s_k \Ec_{k,\ell}(\sv) \geq \SNR_0) = 1, \;\;\; \forall \;\; \sv \in \RR_+^K, \;\; k = 1,\ldots,K, \;\;\; \ell = 0,\ldots, L, \nonumber \\
& & w_{k,\ell}(\sv) \geq 0, \;\;\; \forall \;\; \sv \in \RR_+^K, \;\; k = 1,\ldots,K, \;\;\; \ell = 0,\ldots, L,  \nonumber 
\end{eqnarray}
where $\onev$ denotes the all-one vector, and where the conditional successful decoding event $\Ac(\ell,e,s)$ 
is defined in (\ref{successful}).  Using iterated expectation, we can write 
\begin{eqnarray} \label{constraint}
\EE \left [ \sum_{\ell} w_{k,\ell}(\Sm) R_\ell \times 1\{ \Ac(\ell, \Ec_{k,\ell}(\Sm), S_k) \}   \right ] & = & 
\EE \left [ \EE \left [ \left . \sum_{\ell} w_{k,\ell}(\Sm) R_\ell \times 1\{ \Ac(\ell, \Ec_{k,\ell}(\Sm), S_k) \}  \right | \Sm, \Ec_{k,\ell}(\Sm) \right ] \right ] \nonumber \\
& = & 
\EE \left [ \sum_{\ell} w_{k,\ell}(\Sm) R_\ell \times \EE\Big [ \left . 1\{ \Ac(\ell, \Ec_{k,\ell}(\Sm), S_k) \}  \right | \Sm, \Ec_{k,\ell}(\Sm) \Big ] \right ] \nonumber \\
& = & 
\EE \left [ \sum_{\ell} w_{k,\ell}(\Sm) R_\ell P_\ell (\Ec_{k,\ell}(\Sm) S_k)  \right ]
\end{eqnarray}
Using (\ref{constraint}) into the rate constraint in (\ref{K-user-Phi-smooth}), we rewrite our problem as
\begin{eqnarray} 
\mbox{minimize} & &  \EE \left [ \sum_{k,\ell} w_{k,\ell}(\Sm) \Ec_{k, \ell}(\Sm)  \right ] \label{K-user-Phi-smooth1} \\
\mbox{subject to} & & \EE \left [ \sum_{\ell} w_{k,\ell}(\Sm) R_\ell P_\ell (\Ec_{k,\ell}(\Sm) S_k)   \right ] \geq \lambda_k, \;\;\; \forall \; k = 1,\ldots, K \nonumber \\
& & \onev^\transp \Wm(\sv) \onev = 1, \;\;\; \forall \;\; \sv \in \RR_+^K \nonumber \\
& & \PP( s_k \Ec_{k,\ell}(\sv) \geq \SNR_0) = 1, \;\;\; \forall \;\; \sv \in \RR_+^K, \;\; k = 1,\ldots,K, \;\;\; \ell = 0,\ldots, L, \nonumber \\
& & w_{k,\ell}(\sv) \geq 0, \;\;\; \forall \;\; \sv \in \RR_+^K, \;\; k = 1,\ldots,K, \;\;\; \ell = 0,\ldots, L.  \nonumber 
\end{eqnarray}
The form of problem (\ref{K-user-Phi-smooth1}) is appealing since
the conditional {\em effective rate} terms $R_\ell P_\ell (\Ec_{k,\ell}(\Sm) S_k)$ appearing 
in (\ref{constraint}) coincide with the average ``goodput'' (product of the transmitter rate times the probability of decoding success) 
corresponding to each PHY  mode $\ell$ with received SNR equal to $\Ec_{k,\ell}(\Sm) S_k$.  
For general smooth  functions $P_\ell(\SNR)$, finding the optimal 
stationary policy (i.e., the matrix $\Wm(\sv)$ and the energy conditional probability distribution $p_{\Ec_\pi |\Kc_\pi, \Lc_\pi, \Sm}( e | k,\ell,\sv)$)
is not easy. In fact, using Carath\'eodory theorem  and the approach of \cite{Neely-IT06}, the solution can be given as a convex 
combination (whose coefficients need to be optimized) of $K + 2$ pairs of matrices $\{\Wm_j(\sv), \Em_j(\sv) : j = 1,\ldots, K+2\}$
for each channel state $\sv \in \RR^K_+$  where $\Wm_j(\sv)$ are instances of $\Wm(\sv)$ and $\Em_j(\sv)$ contain instances of the realization 
of the random variable $\Ec_{k,\ell}(\sv)$ for all $k, \ell, \sv$ (see \cite[Appendix A]{Neely-IT06} for details).
This approach, however, does not lead to a viable computational method since it requires discretizing
the channel state space into a sufficiently fine grid of points $\Sc \subset \RR_+^K$, and solving the corresponding 
non-convex optimization problem in $2 LK (K+2) |\Sc|$ variables.   

In order to overcome this difficulty, we propose a method that efficiently finds an achievable sub-optimal 
solution $\overline{\Phi}(\lambdav)$ and, similarly, a lower bound $\underline{\Phi}(\lambdav) \leq \Phi(\lambdav)$. 
The achievable upper bound on the minimum energy function is constructive, and leads to a feasible scheduling policy
that achieves stability whenever $\overline{\Phi}(\lambdav) < \infty$. The lower bound allows us to 
precisely quantify the loss incurred by the simplified feasible policy. In order to prove upper bound, lower bound, 
and an efficient solution method, we need some auxiliary results given below
and proved in Appendix \ref{proofs}. 

Define the effective rate vs. SNR function 
\begin{equation} \label{mu} 
\mu(\SNR) = \max_{\ell = 0,\ldots, L} \left \{ R_\ell P_\ell(\SNR) \right \},
\end{equation}
with domain $\SNR \geq \SNR_0$. An example of such function is shown in Fig.~\ref{PERfig-example}, 
where $\mu(\SNR)$ is plotted vs $\SNR$ (in natural scale, not in dB) for the case of $\SNR_0 = 8$ dB, 
$\ell = 1$ and $\ell = 2$ corresponding to the 2dh3 and 3dh3 enhanced data rate modes of Bluetooth
with payload length equal to 3 slots, used throughout our numerical results (see Table. \ref{PacketSlots}). 

Let  $\overline{\mu}(\SNR)$ denote the upper concave envelope \cite{horst} of $\mu(\SNR)$, with the same support. 
By definition, for all $\ell$, we have
\[ R_\ell P_\ell (\SNR) \leq \mu(\SNR) \leq \overline{\mu}(\SNR), \;\;\; \forall \;\; \SNR \geq \SNR_0. \]
Hence, a relaxation of problem (\ref{K-user-Phi-smooth1}) is obtained as 
\begin{eqnarray} 
\mbox{minimize} & &  \EE \left [ \sum_{k,\ell} w_{k,\ell}(\Sm) \Ec_{k, \ell}(\Sm)  \right ] \label{K-user-Phi-smooth2} \\
\mbox{subject to} & & \EE \left [ \sum_{\ell} w_{k,\ell}(\Sm) \overline{\mu} (\Ec_{k,\ell}(\Sm) S_k)   \right ] \geq \lambda_k, \;\;\; \forall \; k = 1,\ldots, K \nonumber \\
& & \onev^\transp \Wm(\sv) \onev = 1, \;\;\; \forall \;\; \sv \in \RR_+^K \nonumber \\
& & \PP( s_k \Ec_{k,\ell}(\sv) \geq \SNR_0) = 1, \;\;\; \forall \;\; \sv \in \RR_+^K, \;\; k = 1,\ldots,K, \;\;\; \ell = 0,\ldots, L, \nonumber \\
& & w_{k,\ell}(\sv) \geq 0, \;\;\; \forall \;\; \sv \in \RR_+^K, \;\; k = 1,\ldots,K, \;\;\; \ell = 0,\ldots, L,  \nonumber 
\end{eqnarray}
We have:

\begin{lemma} \label{deterministic-E}
The relaxed problem (\ref{K-user-Phi-smooth2}) admits a solution where $\Ec_{k,\ell}(\sv)$ is a deterministic function
$e_{k,\ell}(\sv)$ of  $k,\ell$ and $\sv$. \hfill \QED
\end{lemma}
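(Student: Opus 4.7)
The plan is to exploit the concavity of $\overline{\mu}$ that was introduced precisely for this purpose. Starting from any feasible triple $(\Wm(\cdot), \{\Ec_{k,\ell}(\cdot)\})$ for the relaxed problem (\ref{K-user-Phi-smooth2}), I would construct a deterministic-energy feasible solution that achieves the same objective value and satisfies every constraint at least as tightly.

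The construction is to replace the randomized allocation $\Ec_{k,\ell}(\sv)$ by its conditional mean
\[
e_{k,\ell}(\sv) \;\eqdef\; \EE\!\left[\Ec_{k,\ell}(\sv)\,\big|\,\Sm=\sv\right],
\]
keeping $\Wm(\sv)$ unchanged. First I would check the objective: by iterated expectation and linearity, $\EE[\sum_{k,\ell} w_{k,\ell}(\Sm)\,\Ec_{k,\ell}(\Sm)] = \EE[\sum_{k,\ell} w_{k,\ell}(\Sm)\,e_{k,\ell}(\Sm)]$, so the sum-energy cost is unchanged. Next I would verify the rate constraints: since $\overline{\mu}$ is concave on its domain and $s_k$ is measurable with respect to $\Sm$, Jensen's inequality applied conditionally on $\Sm=\sv$ gives
\[
\EE\!\left[\overline{\mu}\!\left(\Ec_{k,\ell}(\Sm) S_k\right)\,\big|\,\Sm=\sv\right] \;\leq\; \overline{\mu}\!\left(\EE[\Ec_{k,\ell}(\sv) s_k \mid \Sm=\sv]\right) \;=\; \overline{\mu}\!\left(e_{k,\ell}(\sv)\,s_k\right).
\]
Taking the outer expectation and multiplying by the (deterministic given $\Sm$) weights $w_{k,\ell}(\Sm)\geq 0$ yields
\[
\EE\!\left[\sum_\ell w_{k,\ell}(\Sm)\,\overline{\mu}\!\left(\Ec_{k,\ell}(\Sm) S_k\right)\right] \;\leq\; \EE\!\left[\sum_\ell w_{k,\ell}(\Sm)\,\overline{\mu}\!\left(e_{k,\ell}(\Sm) S_k\right)\right],
\]
so if the left-hand side is already $\geq\lambda_k$, so is the right-hand side. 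Finally the lower-bound constraint $s_k\,\Ec_{k,\ell}(\sv)\geq \SNR_0$ almost surely implies $s_k\,e_{k,\ell}(\sv)\geq \SNR_0$ by monotonicity of conditional expectation, and the simplex constraint on $\Wm(\sv)$ is untouched. Hence the deterministic $e_{k,\ell}(\sv)$ yields a feasible solution with the same cost.

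The main obstacle will be the usual measure-theoretic bookkeeping: making sure the conditional expectation $e_{k,\ell}(\sv)$ is well defined as a measurable function of $\sv$ (guaranteed by a regular conditional distribution on $\RR_+^K$) and that Jensen's inequality applies because $\overline{\mu}$ is concave and the conditional distribution of $\Ec_{k,\ell}(\sv)$ has support in $[\SNR_0/s_k,\infty)$, where $\overline{\mu}$ is defined. Once these technicalities are handled, the argument collapses to ``concave rate function + linear cost $\Rightarrow$ deterministic allocation is without loss of optimality,'' which is exactly the reason the envelope $\overline{\mu}$ was introduced in place of the raw $\mu$.
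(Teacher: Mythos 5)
Your proof is correct and follows essentially the same route as the paper's: replace $\Ec_{k,\ell}(\sv)$ by its conditional mean given $\Sm=\sv$, note the objective is unchanged by iterated expectation, and use Jensen's inequality with the concavity of $\overline{\mu}$ to show the rate constraints are preserved. Your additional checks of the $\SNR_0$ constraint and the measurability of the conditional expectation are sensible refinements but do not change the argument.
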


\noindent
The following result  identifies a case for which the relaxation is tight. 

\begin{lemma} \label{indicators}
Consider the case $P_\ell(\SNR) = 1\{ \SNR \geq a_\ell \}$, for some SNR thresholds 
$a_0 = \SNR_0 \leq a_1 \leq \cdots \leq a_L$,  and assume that the non-decreasing piecewise 
linear function joining the points $(a_\ell, R_\ell)$ for
$\ell = 0,\ldots, L+1$ is concave (note: for convenience we define $a_{L+1} = +\infty$ and $R_{L+1} = R_L$). 
Then, the relaxed problem (\ref{K-user-Phi-smooth2}) yields the same 
value as  the original problem (\ref{K-user-Phi-smooth1}), and this is obtained by the deterministic energy allocation function
$e_{k,\ell}(\sv) = a_\ell/s_k$. \hfill \QED
\end{lemma}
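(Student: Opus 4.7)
The plan is to prove the lemma by showing both inequalities between the value of (\ref{K-user-Phi-smooth1}) and its relaxation (\ref{K-user-Phi-smooth2}), and to exhibit $e_{k,\ell}(\sv) = a_\ell/s_k$ as an achievable optimum. The first preparatory step is to identify $\overline{\mu}$ explicitly. Under the hard-threshold assumption, $\mu(\SNR)$ is a staircase taking the value $R_\ell$ on $[a_\ell, a_{\ell+1})$. Let $\phi$ denote the piecewise-linear interpolant through $(a_\ell, R_\ell)$, which is concave by hypothesis. On each interval $[a_\ell, a_{\ell+1}]$, $\phi$ is a chord from $(a_\ell, R_\ell)$ up to $(a_{\ell+1}, R_{\ell+1})$ and so lies above the constant $R_\ell = \mu$ on that interval; thus $\phi \geq \mu$ pointwise, which makes $\phi$ an admissible concave majorant of $\mu$ and forces $\overline{\mu} \leq \phi$. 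Conversely $\overline{\mu} \geq \mu$ gives $\overline{\mu}(a_\ell) \geq R_\ell$, which combined with $\overline{\mu}(a_\ell) \leq \phi(a_\ell) = R_\ell$ yields $\overline{\mu}(a_\ell) = R_\ell$ at every corner; concavity of $\overline{\mu}$ then forces $\overline{\mu}$ above the chord between consecutive corners, which is $\phi$ itself. Therefore $\overline{\mu} \equiv \phi$, a nondecreasing concave piecewise-linear function with breakpoints exactly at the $a_\ell$.

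One inequality is immediate: since $R_\ell P_\ell(\SNR) \leq \overline{\mu}(\SNR)$ pointwise, every solution feasible for (\ref{K-user-Phi-smooth1}) is feasible for (\ref{K-user-Phi-smooth2}) with the same objective, so $\Phi(\lambdav)$ is at least the value of the relaxation. For the reverse, I would take an optimal solution $(\Wm^*, e^*_{k,\ell})$ of (\ref{K-user-Phi-smooth2}), deterministic in energy by Lemma~\ref{deterministic-E}, and convert it to a feasible original-problem solution by per-state time-sharing. For each triple $(\sv, k, \ell)$ with $w^*_{k,\ell}(\sv) > 0$, set $\SNR^* = e^*_{k,\ell}(\sv)\, s_k$; the NULL constraint gives $\SNR^* \geq a_0$, and we may assume $\SNR^* \leq a_L$ because $\overline{\mu}$ is flat on $[a_L, \infty)$ so any excess can be clipped to $a_L$ without changing the rate while strictly decreasing energy. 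Locate the consecutive corners $a_{\ell_1} \leq \SNR^* \leq a_{\ell_2}$ and write $\SNR^* = \alpha a_{\ell_1} + (1-\alpha) a_{\ell_2}$; by linearity of $\overline{\mu}$ on $[a_{\ell_1}, a_{\ell_2}]$, $\overline{\mu}(\SNR^*) = \alpha R_{\ell_1} + (1-\alpha) R_{\ell_2}$. Replace the mass $w^*_{k,\ell}(\sv)$ on mode $\ell$ by mass $\alpha w^*_{k,\ell}(\sv)$ on mode $\ell_1$ with energy $a_{\ell_1}/s_k$ and mass $(1-\alpha)\, w^*_{k,\ell}(\sv)$ on mode $\ell_2$ with energy $a_{\ell_2}/s_k$, and aggregate contributions across source $\ell$'s to define $(\Wm', e'_{k,\ell'})$. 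Direct substitution shows that the new weights still sum to one, that the energy contribution from this triple is preserved (since $\alpha a_{\ell_1} + (1-\alpha) a_{\ell_2} = \SNR^*$), and that the effective rate contribution equals $\alpha R_{\ell_1} + (1-\alpha) R_{\ell_2} = \overline{\mu}(\SNR^*)$, because each received SNR $a_{\ell_i}$ activates $P_{\ell_i}(a_{\ell_i}) = 1$. The NULL constraint holds since $a_{\ell'} \geq a_0 = \SNR_0$ for every active mode. Summing over $\ell$ and $k$, the new policy is feasible for (\ref{K-user-Phi-smooth1}) with the same total energy, giving the reverse inequality. Since the construction only uses energies of the form $a_\ell/s_k$, the deterministic allocation $e_{k,\ell}(\sv) = a_\ell/s_k$ is optimal.

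The main obstacle I expect is the bookkeeping in the splitting step: different source modes $\ell$ may route their split masses to overlapping target pairs $(\ell_1, \ell_2)$, so the new weight $w'_{k,\ell'}(\sv)$ is a sum over several contributions, and one must check that this aggregation remains a bona fide probability distribution on $\{0,\ldots,L\}$; this follows immediately because each original weight is split into two non-negative parts whose sum equals the original weight, so the total mass is preserved row by row. Minor care is also needed for boundary cases: $\SNR^* = a_\ell$ is handled with $\alpha = 1$, and $\SNR^* > a_L$ is dealt with by the preliminary capping argument.
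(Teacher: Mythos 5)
Your proof is correct, and it reaches the conclusion by a genuinely different route than the paper. The paper works by normalizing an arbitrary feasible point of the \emph{original} problem: it first quantizes any randomized energy allocation onto the two mass points $a_0/s_k$ and $a_\ell/s_k$ (this preserves the indicator values and cannot increase the objective), then rewrites the policy as a conditional pmf over modes, concentrates that pmf onto two consecutive modes by exactly the convex split you use, and finally invokes a separate equivalence result (Lemma~\ref{equivalence}, proved via Jensen's inequality and concavity of $\overline{\mu}$) to identify the resulting reduced problem with the relaxation (\ref{K-user-Phi-smooth2}). You instead run the nontrivial direction the other way: starting from a deterministic-energy optimizer of the relaxation (via Lemma~\ref{deterministic-E}), you round it \emph{into} the feasible set of the original problem by time-sharing between the two consecutive thresholds bracketing each operating SNR. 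This is cleaner in two respects: the easy containment (any original-feasible point is relaxation-feasible, since $R_\ell P_\ell(\SNR)\le\overline{\mu}(\SNR)$ pointwise even for randomized energies) disposes of all the pathology that the paper's quantization step exists to handle, and you never need the separate Jensen-based equivalence lemma. You also prove, rather than merely assert, that $\overline{\mu}$ coincides with the piecewise-linear interpolant through $(a_\ell,R_\ell)$. Two minor points, neither affecting correctness and both present in the paper's own treatment at the same level of informality: strictly one should phrase the rounding for $\epsilon$-optimal solutions of the relaxation unless attainment of the infimum is argued, and the constraint $\PP(s_k\,\Ec_{k,\ell}(\sv)\ge\SNR_0)=1$ is imposed for all $(k,\ell)$ including modes receiving zero mass, which is handled trivially by assigning energy $a_\ell/s_k$ there.
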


It is also immediate to observe that when the PHY mode success probabilities are given by indicator functions (as in the assumption of Lemma \ref{indicators}) but the piecewise linear function joining the points $(a_\ell, R_\ell)$ for $\ell = 0,\ldots, L+1$ is not concave,  then we can obtain an equivalent problem with a reduced number $L' < L$ of PHY modes, by retaining only the modes for which the corresponding point $(a_\ell, R_\ell)$ is on the boundary of the convex hull. 
In other words, PHY modes whose SNR threshold -- rate point $(a_\ell, R_\ell)$ is inside the convex hull of other modes
are irrelevant as far as the minimum energy problem is concerned, and can be dropped
without changing the system performance. Therefore, there is no loss of generality in the concavity assumption of
Lemma \ref{indicators}. 

Summarizing: we have identified one case, namely, the case where the PHY mode success probabilities 
are given by indicator functions $1\{ \SNR \geq a_\ell\}$ for some mode-dependent SNR thresholds $a_\ell$, for which
the relaxation obtained by replacing $\mu(\SNR)$ with its upper concave envelope $\overline{\mu}(\SNR)$, 
does not increase the value of the problem. Moreover, in this case the optimal energy allocation is
deterministic, given explicitly by  $e_{k,\ell}(\sv) = a_\ell/s_k$.  As we will see next, for this case we can find efficiently 
an exact solution. Since the case of concave increasing piecewise linear rate functions 
is easily handled,  it is natural to seek achievable upper bounds and lower bounds for the case of 
general $P_\ell(\SNR)$ by defining appropriate concave piecewise linear rate functions.

\subsection{Exact solution for the case $P_\ell(\SNR) = 1\{\SNR \geq a_\ell\}$}

In passing, we notice that this case is relevant for long packets and strong channel coding schemes. 
For example, assuming a family of capacity-achieving codes for the circularly symmetric AWGN channel \cite{Cover_Thomas_book}, and using the strong converse to the channel coding 
theorem,\footnote{The strong converse result states that if an ensemble of capacity achieving codes is used
at rates above the capacity, its block error probability converges to 1 as the block length increases.  This result is proved in great generality in \cite{verdu-han}.} 
we have that the successful decoding probability for rate $R_\ell$ behaves as an indicator function with SNR threshold $a_\ell = 2^{R_\ell} - 1$.  In practice, heavily coded systems (e.g., systems using powerful turbo or LDPC codes \cite{urbanke-book}, or concatenated convolutional and Reed-Solomon codes \cite{proakis-book}) are characterized by very sharp error probability waterfalls: as soon as the received SNR falls below  a certain code-dependent threshold, 
the post-decoding block error probability increases very rapidly.  In this case, approximating $P_\ell(\SNR)$ with an indicator function for an appropriately chosen SNR threshold
may be meaningful. 

By Lemmas \ref{deterministic-E} and \ref{indicators} and their consequences summarized before, we have that
in the case $P_\ell(\SNR) = 1\{\SNR \geq a_\ell\}$ the minimum energy function problem (\ref{K-user-Phi-smooth1})
is equivalent to:
\begin{eqnarray} 
\mbox{minimize} & &  \EE \left [ \bv^\transp(\Sm) \Wm(\Sm) \av  \right ] \label{K-user-Phi1} \\
\mbox{subject to} & & \EE \left [  \Wm(\Sm) \rv   \right ] \geq \lambdav, \nonumber \\
& & \onev^\transp \Wm(\sv) \onev = 1, \;\;\; \forall \;\; \sv \in \RR_+^K \nonumber \\
& & w_{k,\ell}(\sv)  \geq 0, \;\;\; \forall \;\; \sv \in \RR_+^K, \;\; k = 1,\ldots,K, \;\;\; \ell = 0,\ldots, L,  \nonumber
\end{eqnarray}
where we define $\bv(\sv) = (1/s_1, \ldots, 1/s_K)^\transp$, $\av = (a_0, \ldots, a_L)^\transp$ and
$\rv = (0, R_1, \ldots, R_L)^\transp$. 

Notice that (\ref{K-user-Phi1}) is an infinite-dimensional linear program, 
since both the objective function and the constraints are linear in the weighting function $\Wm(\sv)$.  
For notation convenience and later use, we let $r_\ell$ denote the $\ell$-th components of the rate vector $\rv$. 
We shall discuss the solution (\ref{K-user-Phi1}) allowing for some lack of rigor.  We can think of $\Wm(\sv)$ as an array of variables 
$w_{k,\ell}(\sv)$ for each $k,\ell$ and $\sv$, and replace expectations with  
the corresponding Stieltjes integrals \cite{grimmett}.
\footnote{A more rigorous approach consists of using variational calculus, but here we chose this more intuitive argument.} 
Then, the Lagrangian function for (\ref{K-user-Phi1}) is given by 
\begin{eqnarray} \label{lagrangian-K-user}
\Lc(\Wm, \omegav,\nu) & = & \int_{\RR_+^K}  \left [ \bv^\transp(\sv) \Wm(\sv) \av  - \omegav^\transp \left ( \Wm(\sv) \rv  - \lambdav \right ) 
+ \nu(\sv) \left ( \onev^\transp \Wm(\sv) \onev  -  1 \right ) \right ] \; dF_S(\sv), 
\end{eqnarray}
where $\omegav$ (non-negative vector) and $\nu(\sv)$ (non-negative scalar function) are 
dual variables.  By taking the partial derivative with respect to $w_{k,\ell}(\sv)$, for given $k,\ell$ and $\sv$,  
we find
\begin{eqnarray} \label{KKT2}
\frac{\partial \Lc}{\partial w_{k,\ell}(\sv)} & = & dF_S(\sv) \left [ \frac{a_\ell}{s_k}  - \omega_k r_\ell  + \nu(\sv) \right ]. 
\end{eqnarray}
Then, since (\ref{K-user-Phi1}) is a linear program in $\Wm(\sv)$,  
the solution must be found at the vertex of the domain of $\Wm(\sv)$ for which 
the inner product between $\Wm(\sv)$ and the vector of partial derivatives  is minimum. This corresponds to
choosing with probability 1 the sensor $k$ with PHY mode $\ell$ such that 
the corresponding term $\frac{\partial \Lc}{\partial w_{k,\ell}(\sv)}$ in (\ref{KKT2}) is minimum. Hence, we have proved the following result:

\begin{theorem} \label{multi-sensor-min-power}
The minimum of (\ref{lagrangian-K-user}) over $\Wm(\sv)$ for fixed $\omegav \geq 0$ and given  vectors $\av$ and $\rv$  is given by 
\begin{equation} \label{sol-K}
w_{k,\ell}(\sv) = \left \{ 
\begin{array}{ll}
1 & \mbox{if} \; \omega_k r_\ell  - \frac{a_\ell}{s_k}  \geq \omega_i r_j  - \frac{a_j}{s_i}  \; \forall \; (i,j) \\
0 & \mbox{otherwise} 
\end{array} \right . 
\end{equation}
\hfill \QED 
\end{theorem}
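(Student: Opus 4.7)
The plan is to exploit the fact that the Lagrangian (\ref{lagrangian-K-user}) is an integral over $\sv$ of a per-state objective that is linear in the entries $w_{k,\ell}(\sv)$; the minimization over $\Wm(\cdot)$ therefore decouples across channel states. For each $\sv$ in the support of $F_S$, I would solve the state-conditional problem of minimizing
\[
\sum_{k,\ell} w_{k,\ell}(\sv)\left(\frac{a_\ell}{s_k} - \omega_k r_\ell\right)
\]
over the probability simplex $\{w_{k,\ell}(\sv)\geq 0,\ \onev^\transp\Wm(\sv)\onev = 1\}$, noting that the term $\nu(\sv)(\onev^\transp\Wm(\sv)\onev - 1)$ vanishes on the simplex and the remaining pieces $\omegav^\transp\lambdav$ and $-\nu(\sv)$ are independent of $\Wm$.

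Next, I would invoke the elementary linear-programming fact that the minimum of a linear function over a probability simplex is attained at an extreme point, i.e., at a vertex that concentrates all mass on a single index $(k^\star,\ell^\star)$ minimizing the coefficient $c_{k,\ell}(\sv) := a_\ell/s_k - \omega_k r_\ell$. Rewriting ``minimize $c_{k,\ell}$'' as ``maximize $\omega_k r_\ell - a_\ell/s_k$'' yields exactly the rule (\ref{sol-K}). This is fully consistent with the partial-derivative expression (\ref{KKT2}), since the common $dF_S(\sv)$ factor and the $\nu(\sv)$ addend are the same for every $(k,\ell)$ and therefore do not affect which index attains the minimum slope.

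The main subtlety, which I expect to be the chief (though mild) obstacle, is that the chosen $\Wm(\sv)$ must be a measurable function of $\sv$ for the integral formulation to be well-defined, and that $\arg\min_{k,\ell} c_{k,\ell}(\sv)$ may be non-unique on a set of positive $F_S$-measure. To address this, I would note that each $c_{k,\ell}(\sv)$ depends continuously on $\sv$ (through $1/s_k$ only), so the level sets of the $\arg\min$ correspondence are Borel and a measurable selection exists; on ties, any convex combination of the tied vertices attains the same Lagrangian value, so (\ref{sol-K}) describes a canonical (possibly non-unique) minimizer without loss of generality.
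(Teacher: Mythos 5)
Your proposal is correct and follows essentially the same route as the paper: the Lagrangian decouples across channel states into a linear objective over the per-state probability simplex $\{w_{k,\ell}(\sv)\geq 0,\ \onev^\transp\Wm(\sv)\onev=1\}$, whose minimum is attained at the vertex with the smallest coefficient $a_\ell/s_k-\omega_k r_\ell$ (the common $\nu(\sv)$ and $dF_S(\sv)$ factors being irrelevant to the $\arg\min$), which is exactly the rule in (\ref{sol-K}). Your added remarks on measurable selection and tie-breaking are sound refinements of details the paper leaves implicit.
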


The solution (\ref{sol-K}) corresponds to partitioning the channel state space 
into {\em decision regions} $\Rc_{k,\ell} = \{ \sv \in \RR_+^K : w_{k,\ell}(\sv) = 1 \}$. 
These regions for $K > 1$ sensors are not just simple hyper-rectangles, and therefore
it is difficult to obtain a more explicit characterization. We will see in Section \ref{one-user} that for the single-sensor 
case the regions  are intervals that can be characterized more explicitly.  In general, 
letting $\Wm_{\omegav} (\sv)$ denote the  solution 
(\ref{sol-K}), we can define the  corresponding average sum energy:
\begin{equation} \label{energy-omega} 
\Gamma(\omegav) = \EE \left [ \bv^\transp(\Sm) \Wm_{\omegav} (\Sm) \av \right ], 
\end{equation}
and average rate vector
\begin{equation} \label{rate-omega} 
\Cm (\omegav) = \EE \left [ \Wm_{\omegav} (\Sm) \rv \right ]. 
\end{equation}
Although the decision regions defined by Theorem \ref{multi-sensor-min-power} do not lead, in general, to a closed-form
for the expectations in (\ref{energy-omega}) and in (\ref{rate-omega}), these can be easily computed by Monte Carlo 
averaging over the channel fading state $\Sm \sim F_S(\sv)$. 
Eventually, the minimum energy function for given threshold SNRs vector $\av$ and rate vector $\rv$, 
can be obtained by solving the Lagrangian dual problem with respect to $\omegav$, i.e., by maximizing
$\Lc(\Wm_{\omegav}, \omegav)$ over $\omegav \in \RR_+^K$.\footnote{Since the constraint corresponding to the dual variable $\nu(\sv)$ must be 
satisfied with equality for the solution $\Wm_{\omegav}$ in (\ref{sol-K}), then $\nu(\sv)$ 
becomes irrelevant and it is dropped for notational simplicity.}  
This can be obtained by a subgradient method, starting from some initial value $\omegav(0) \in \RR_+^K$, 
and successively updating the dual variables according to
\begin{equation} \label{sub-iteration} 
\omegav(n+1) = \omegav(n) + \epsilon_n \vv(n),
\end{equation}
where $\vv(n)$ is a subgradient for the problem.  A subgradient can be found by noticing that, for any $\omegav', \omegav \in \RR_+^K$, 
since $\Wm_{\omegav'}$ is a minimizer of $\Lc(\Wm,\omegav')$,  we have
\begin{eqnarray} \label{subgradient}
\Lc(\Wm_{\omegav'}, \omegav') 
& = & \Gamma(\omegav') - (\omegav')^\transp (\Cm(\omegav') - \lambdav ) \nonumber \\
& \leq & \Gamma(\omegav) - (\omegav')^\transp (\Cm(\omegav) - \lambdav ) \nonumber \\
& \leq & \Lc(\Wm_{\omegav}, \omegav)  + (\omegav' - \omegav)^\transp (\lambdav - \Cm(\omegav) ).
\end{eqnarray}
It follows that a subgradient at $\omegav(n)$ is given by the vector $\vv(n) = \lambdav - \Cm(\omegav)$. 
The sub-gradient iteration (\ref{sub-iteration}) with this choice of the subgradient has an intuitive meaning:
if for some $k$, the $k$-th component of the vector $\lambdav - \Cm(\omegav)$ is positive 
(i.e., the average service rate of sensor $k$ is below its arrival rate $\lambda_k$), then the corresponding weight $\omega_k(n)$ is increased.  
Otherwise, it is decreased. The iteration step can be chosen as $\epsilon_n = \epsilon_0 \frac{1+b}{n+b}$, for suitable parameters 
$\epsilon_0, b > 0$.  The resulting sequence of values $\Gamma(\omegav(n))$ obtained by the above iteration 
converges to $\Phi(\lambdav)$ as $n \rightarrow \infty$ (in practice, we checked that convergence is very fast). 

\subsection{Achievable upper bound on the minimum energy function} 

Driven by the intuition developed for the case where the PHY mode successful decoding probabilities are indicator functions, we 
fix the receiver SNR threshold values $\av = (a_0, \ldots, a_L)^\transp$, with $a_0 = \SNR_0$, and define the 
deterministic energy allocation function $e_{k,\ell}(\sv) = \frac{a_\ell}{s_k}$. Replacing this into problem (\ref{K-user-Phi-smooth1}), we obtain  that the optimization with respect to $\Wm(\sv)$ is again given in the form (\ref{K-user-Phi1}), 
where now $\av$ is fixed {\em a priori}, and $\rv = (0, R_1P_1(a_1), \ldots, R_LP_L(a_L))^\transp$. 
In the example of Fig.~\ref{PERfig-example}, $\mu(\SNR)$ is shown 
together with the piecewise constant curve defined by the points $\{ (a_\ell, R_\ell P_\ell(a_\ell) : \ell = 0, \ldots, L\}$, 
corresponding to a specific choice of the target received SNRs $a_\ell$. 
Eventually, we arrive at a linear program formally identical to what we have already solved, 
for a given set of receiver SNR values and for the corresponding set of {\em effective rates}
$r_\ell = R_\ell P_\ell(a_\ell)$. Therefore, Theorem \ref{multi-sensor-min-power} and the subgradient iteration can be applied verbatim,  
yielding the solution $\Wm(\sv;\av)$ and an achievable upper bound $\Phi(\lambdav; \av) \geq \Phi(\lambdav)$.  

The achievable upper bound can be tightened by optimizing over $\av$, through an educated exhaustive search over an appropriate domain.
In order to determine a suitable search domain, consider the typical behavior of 
$\mu(\SNR)$ as shown in the example of Fig.~\ref{PERfig-example}.
We notice that for each $\ell$-th PHY mode there exists a narrow interval where the effective 
rate presents a sharp  transition between very small to almost $R_\ell$ effective rate. 
This corresponds to the ``waterfall'' of the successful decoding probability $P_\ell(\SNR)$ (see Fig.~\ref{PERfig}). 
The optimal values of $a_\ell$ is found in this transition interval, indicated by $[\SNR^{(0)}_\ell, \SNR^{(1)}_\ell]$. 
As a rule of thumb, we choose $\SNR^{(1)}_\ell$ such that $P_\ell(\SNR^{(1)}_\ell) = 0.99$, and 
$\SNR^{(0)}_\ell$ such that $P_\ell(\SNR^{(0)}_\ell) = 0.1$. 
We argue that searching outside this interval is useless.  In fact, for $a_\ell < \SNR^{(0)}_\ell$ the contribution of 
PHY mode $\ell$ to the overall average rate is too small,  and therefore this mode is never selected by the scheduling policy (i.e., the corresponding probability $w_{k,\ell}(\sv)$ solution of  (\ref{K-user-Phi1}) is zero for all $k$ and $\sv$).
In contrast, for $a_\ell > \SNR^{(1)}_\ell$ the contribution of PHY mode $\ell$ to the average rate
does not increase, since it is essentially already almost equal to its upper bound $R_\ell$, 
while the contribution to the average energy increases linearly with $a_\ell$. 
Therefore, the proposed achievable upper bound $\overline{\Phi}(\lambdav)$ to the minimum energy function 
is obtained by  minimizing $\Phi(\lambdav;\av)$ over $\av$ in the Cartesian product 
region  $\prod_{\ell=1}^L \left [ \SNR^{(0)}_\ell , \SNR^{(1)}_\ell \right ]$, for  such appropriately defined intervals.
For small $L$, as the Bluetooth-like system  at hand, this can be done by discretizing this $L$-dimensional hyper-rectangular search 
region  and exhaustively calculating $\Phi(\lambdav; \av)$ 
for each $\av$ in the  discretized grid. Notice that this complexity depends on the number of transmission modes $L$ and 
on the shape of the mode effective rate function (e.g., see Fig.~\ref{PERfig-example}), and not on the number of 
sensors $K$.  We wish to remark that this search is needed only for the purpose of performance evaluation, 
and it is performed off-line.  This has {\em no impact on the complexity of the dynamic scheduling algorithms} 
of Sections \ref{opt-scheduling} and \ref{switch}, as we shall discuss later. 

\subsection{Lower bound on the minimum energy function} 

As mentioned before, we construct a lower bound for $\Phi(\lambdav)$ by finding a piecewise linear concave
function $\widetilde{\mu}(\SNR) \geq \overline{\mu}(\SNR)$ and use this in the relaxed problem (\ref{K-user-Phi-smooth2}). 
The rate function upper bound is obtained as follows.  Let $\widetilde{a}_0 = \SNR_0$ and $\widetilde{\mu}_0 = 0$. 
Then, for $\ell = 0,1,\ldots, L-1$, find  the straight line passing through the point $(\widetilde{a}_\ell, \widetilde{\mu}_\ell)$, strictly upperbounding 
$\mu(\SNR)$ for all $\SNR > \widetilde{a}_\ell$ with the exception of at most one tangent point. 
Let the straight line be given by 
\[ y = m_\ell (x - \widetilde{a}_\ell) + \widetilde{\mu}_\ell, \]
for some $m_\ell \geq 0$,  and find the intercept of  this line with the horizontal line $y = R_{\ell+1}$.  
Denote the abscissa of this intercept by $\widetilde{a}_{\ell+1}$, let $\widetilde{\mu}_{\ell+1} = R_{\ell+1}$, 
let $\ell \rightarrow \ell + 1$ and repeat the procedure. 
In this way, by linearly interpolating the obtained points, we have constructed a piecewise linear 
function $\widetilde{\mu}(\SNR)$ with $L+1$ segments joining the points 
$\{ (\widetilde{a}_\ell, \widetilde{\mu}_\ell): \ell = 0, \ldots, L\}$, completed with 
a last horizontal segment $\widetilde{\SNR} = R_L$ for $\SNR \geq \widetilde{a}_L$. 
The example of Fig.~\ref{PERfig-example}, shows $\widetilde{\mu}(\SNR)$ obtained as said above, 
together with the actual non-concave rate function $\mu(\SNR)$. 

From the proof of Lemma \ref{indicators} it is apparent that the relaxed problem (\ref{K-user-Phi-smooth2}) obtained by 
replacing $\overline{\mu}(\SNR)$ with the concave piecewise linear upper bound 
$\widetilde{\mu}(\SNR)$ is equivalent to the original problem 
(\ref{K-user-Phi-smooth1}), with indicator function probabilities  $P_\ell(\SNR) = 1\{ \SNR \geq \widetilde{a}_\ell\}$. 
Hence, Theorem \ref{multi-sensor-min-power} and the subgradient 
search illustrated before can be used to efficiently obtain the sought lower 
bound $\underline{\Phi}(\lambdav)$. 

\subsection{A numerical example}

As a concluding example, we show in Fig.~\ref{bounds} the upper and lower bounds to $\Phi(\lambda)$ for the case of 
a single sensor $K=1$,  and the rate functions of Fig.~\ref{PERfig-example}, obtained with the fading 
cdf $F_S(\cdot)$ used throughout all numerical results in this paper, described below. 
Three curves are shown.  The solid line corresponds to $\overline{\Phi}(\lambda)$ obtained as explained before, including the optimization over
the receiver SNR vector $\av$. The dotted line shows a non-optimized upper bound, obtained by choosing
the threshold SNRs such that the corresponding $P_\ell(a_\ell) = 0.99$. These are the same values corresponding to the 
piecewise constant rate function of Fig.~\ref{PERfig-example}, and are included here to show the effect of searching over
$\av$. Finally, the dashed line corresponds to $\underline{\Phi}(\lambda)$ calculated as explained above, using 
the function $\widetilde{\mu}(\SNR)$ shown in Fig.~\ref{PERfig-example}.

\paragraph{Remark on the channel state statistics}
The problems leading to $\Phi(\lambdav)$, $\underline{\Phi}(\lambdav)$ and $\overline{\Phi}(\lambdav)$
are generally feasible if the products  $a_\ell \EE[1/S_k | \Sm \in \Rc_{k,\ell} ]$ are finite for all $k,\ell$. 
This may not hold for some  fading distributions.  For example, in the case of Rayleigh (resp., Ricean) fading, 
$S_k$ is central (resp., non-central) chi-squared with two degrees of freedom. 
For $K = 1$ (single sensor)  and $\Rc_{1,\ell}$ containing the origin, 
then the expectation of the inverse channel fading is unbounded. In particular, for the case 
$K = 1$  we have the policy decision region (see Section \ref{one-user}) $\Rc_{1,0} = [0, s_1)$ for 
some fading threshold $s_1 > 0$, so that $\SNR_0 \EE[1/S | S \in [0, s_1)]$ 
is unbounded,  unless  $\SNR_0 = 0$.  It should be noticed here that for any $K > 1$ the {\em multiuser diversity} inherent in the system 
is sufficient to achieve finite average energy even for $\SNR_0 > 0$.  Nevertheless, in order to avoid analytical problems with the case 
$K = 1$, $\SNR_0 > 0$,  in all numerical results presented in this paper we used a truncated Ricean fading 
distribution \cite{proakis-book} with probability density function (pdf)  $f_S(s)$ with support 
$s \in [s_{\min}, \infty)$, with Rice factor $\Kc = 6.95$ dB,
unit second moment and where $s_{\min} = 0.01$ corresponding to a deep fade 
of $-20$ dB. These fading statistics are relevant for typical WBAN.

\paragraph{Remark on the Bluetooth packet length}
In the Bluetooth-like system model, each transmission can span 1, 3 or 5 time slots with different rate/successful detection 
probability tradeoffs.  
We assume that the span of each packet is not adapted on a per-slot basis, 
but it is optimized offline, based on the fading statistics. 
Therefore,  the number of time slots per payload is fixed for all transmissions. 
We evaluated the $\overline{\Phi}(\lambdav)$  for each of the packet lengths using the method given in this section, 
and we found that for a wide range of arrival rates  the minimum energy is  achieved for payload length of 3 time slots. 
Hence, in all numerical results presented in this work we set the number of slots per payload to 3, i.e., 
we used PHY  modes 2dh3 and 3dh3 of Table \ref{PacketSlots} and Fig. \ref{PERfig}. 
The corresponding effective rates, including the protocol overhead, are given by $R_1 = 2 \times 367/371$ and 
$R_2 =  3 \times 552/556$.

\subsection{A closer look at the single-sensor case} \label{one-user}

In this section, we take a closer look at the single-sensor case and provide 
more insight on the structure of the policy achieving $\Phi(\lambda)$ (in the case 
$P_\ell(\SNR) = 1\{ \SNR \geq a_\ell\}$)  or $\overline{\Phi}(\lambda)$ (in the general smooth $P_\ell(\SNR)$ case). 
In both this section and Section \ref{switch}, considering a single sensor, 
we drop the sensor index for the sake of notation simplicity.

The decision regions of Theorem \ref{multi-sensor-min-power}, in this case, are given by:
\begin{equation} \label{dec-region1}
\Rc_\ell = \left \{ s \in \RR_+ \; : \; \omega r_\ell  - \frac{a_\ell}{s}  \geq \omega r_j  - \frac{a_j}{s}, \; \forall \; j \neq \ell \right \} 
\end{equation}
In order to proceed further,  we assume that (as in Lemma \ref{indicators}) the points $(a_0, 0)$, $(a_1,r_1)$, 
$(a_2,r_2)$,  $\ldots$,  $(a_L,r_L)$ define a concave non-decreasing piecewise linear function 
for $\SNR \geq a_0$. 
This corresponds to the general principle of diminishing return that typically occurs in communications 
channels, such that any additional increment of received SNR provides proportionally less and less rate as SNR increases. 
For example, the SNR thresholds $a_\ell = \SNR_\ell = 2^{r_\ell} - 1$ corresponding to
a family of capacity-achieving codes for the circularly symmetric AWGN channel satisfies this property.

Under this concavity condition, it is immediate to see that the regions $\Rc_\ell$ are in fact intervals 
that partition the channel state space $\RR_+$, for any value of the Lagrange multiplier $\omega > 0$. 
Defining the breakpoints $s_0 = 0$, $s_{L+1} = +\infty$ and
\begin{equation} \label{state-thresholds} 
s_\ell = \frac{a_\ell - a_{\ell-1}}{\omega (r_\ell  - r_{\ell-1})}, 
\end{equation}
for $\ell = 1,2,\ldots, L$. The concavity condition implies the ordering  $s_0 < s_1 \leq s_2 \leq \cdots \leq s_L < s_{L+1}$ and, 
consequently,  $\Rc_\ell = [s_\ell, s_{\ell+1} )$.  

It follows that the optimization solution in the single sensor case  takes on the intuitive form of a 
mode selection  strategy that depends on the strength of the channel state: if $S
\in [s_\ell, s_{\ell+1})$,  then mode $\ell$ is chosen with transmit energy per symbol $\frac{a_\ell}{s}$. 
As already observed for Lemma \ref{indicators},  if the concavity condition is not satisfied, 
some PHY modes are irrelevant for the minimum energy solution. 
We can re-define the problem for $L' < L$ modes, satisfying the concavity condition. 
Notice the upper concave envelope $\overline{\mu}(\SNR)$ of the SNR/rate points
$\{(a_\ell, r_\ell)\}$ yields the achievable long-term average rate versus received SNR curve obtained 
by time-sharing between the different PHY modes. If a point lies below this curve, it means that a better SNR/rate 
point can be achieved by time-sharing  between two other such points. Intuitively, this explains why these points can be 
discarded from the optimization. 

By using the explicit expression of the decision regions into the expression of the average energy and 
rate (\ref{energy-omega}) and (\ref{rate-omega}), we obtain
\begin{equation} \label{energy-omega1} 
\Gamma(\omega) = \sum_{\ell=0}^L \int_{s_\ell}^{s_{\ell+1}} \frac{a_\ell}{s} dF_S(s),    
\end{equation}
and
\begin{equation} \label{rate-omega1} 
C (\omega) =  \sum_{\ell=1}^L r_\ell \int_{s_\ell}^{s_{\ell+1}} dF_S(s),  
\end{equation}
where the dependence on $\omega$ is contained in the thresholds $\{s_\ell\}$ through (\ref{state-thresholds}).
Finally, the Lagrange multiplier is obtained by solving (numerically) the equation
$C(\omega) = \lambda$. 

When we evaluate the achievable upper bound $\overline{\Phi}(\lambda)$, 
searching for the optimal target receiver SNR vector $\av$ is easily done by using the almost closed-form expressions
(\ref{energy-omega1}) -- (\ref{rate-omega1}).  Then, the same values (which depend only on the protocol modes and not on the number of sensors), 
can be reused for the multisensor case (recall that any choice of $\av$ yields a valid upper bound). 
Fig.~\ref{sec5} shows $\Phi(\lambda)$ for $P_\ell(\SNR)$ given by  Fig. \ref{PERfig} for the 
two enhanced data rates 2dh3 and 3dh3, a single sensor ($K = 1$), and different choices of the NULL packet target SNR.  
For comparison, we also show $\Phi(\lambda)$ for a system with the same PHY mode rates of 2 and 3 bit/s/Hz, but exploiting
ideal capacity achieving codes and $\SNR_0 = 0$. The very large gap between the minimum energy function of Bluetooth-like systems
and the ideal system stresses the fact that the PHY layer of Bluetooth is very suboptimal, in exchange for very low complexity.

\section{ÒMinimum average energy subject to (finite) average delay constraints}  \label{opt-scheduling}

In this section we consider the  more general problem of
minimizing the transmit energy subject to an average (finite) delay constraint, 
particularize the scheduling policy developed in \cite{Neely-IT07}  to our setting, in order to
operate closely to any desired point of the optimal energy-delay tradeoff curve. 
This policy depends on the control parameter $V \in \RR_+$, and provably achieves near-optimal 
energy-delay tradeoff in the following sense: for $V > R_L$, the policy achieves delay $\overline{D} = O(\sqrt{V} \log V)$ with average 
transmitted sum-energy per symbol $\overline{E} - \Phi(\lambdav) = O(1/V)$. 
For twice-differentiable $\Phi(\lambdav)$, it is known that 
any policy must satisfy the Berry-Gallager bound: if $\overline{E} - \Phi(\lambdav) = O(1/V)$ then 
$\overline{D} = \Omega(\sqrt{V})$ \cite{Neely-IT07}.  In our case, since the channel fading has a continuous distribution, 
$\Phi(\lambdav)$  is a strictly convex continuous and increasing function of $\lambdav$ 
and therefore it is twice differentiable almost everywhere \cite{rockafellar1996convex} with strictly positive Hessian.

The maximum absolute variation of $Q_k(t)$ for any  $k$ from one slot to the next is bounded by 
$\delta_{\max} = \max\{ A_{\max}, R_L\}$, where $A_{\max}$ is the bound on the arrival process. 
We assume a strict peak power constraint $\Pc_{\max}$, such that $\Pc_{\max} s_{\min} \geq \SNR_0$, therefore, 
the minimum required received SNR while transmitting rate $R_0 = 0$ (NULL control packets) can always be achieved. 
Defining the constants $Q_{th}, \zeta, \nu > 0$ such that, for $V > 0$, 
$\zeta = \frac{\nu}{\delta_{\max}^2} e^{-\nu/\delta_{\max}}$, 
$\nu = \frac{1}{\sqrt{V}}$ and $Q_{th} = \frac{6}{\zeta} \log \frac{1}{\nu}$ (see \cite{Neely-IT07}) the dynamic scheduling policy proceeds as follows. 

The scheduling policy is given in Algorithm \ref{algo1}. 
At each slot-time $t$, the scheduler chooses a sensor $k(t)$, 
a transmit energy per symbol  $E_{k(t)}(t)$  and  a mode $\ell(t)$.  
This selection is given as the result of the optimization step (\ref{Eopt}), performed at each slot time $t$.
The weights in this optimization are computed by updating {\em auxiliary queues}
$\Xm(t) = (X_1(t), \ldots, X_K(t))$ according to
\begin{eqnarray} \label{X}
X_k(t + 1) & = & \left [ X_k(t) - \mu(E_k(t) S_k(t)) -  \nu 1\{Q_k(t) < Q_{th} \} \right ]_+ \nonumber \\
& & + A_k(t) +  \nu 1\{Q_k(t) \geq Q_{th} \}
\end{eqnarray}
where $\mu(\SNR)$ is defined in (\ref{mu}), extended such that $\mu(\SNR) = 0$ for $\SNR \leq \SNR_0$.  
Also, it is understood that if $k(t) \neq k$ then $E_k(t) = 0$.  

\begin{algorithm}
\caption{Multi-sensor Opportunistic Scheduler.}
\label{algo1}
\begin{itemize}
\item Initialize $\Xm(0)= \zerov$.
\item {\bf for} $t = 1,2,...$ {\bf repeat :}
\begin{itemize} {\bf for} $k=1, \ldots, K$ {\bf let}: 
\begin{eqnarray}
\widehat{E}_k &=& \arg\min_{e \in [\SNR_0/S_k(t), \Pc_{\max}]} \left \{ V e -  \left [ W_k(t) \right ]_+ \;  \mu(e S_k(t)) \right \} 
\label{Eopt}
\end{eqnarray}
where
\begin{eqnarray} \label{W}
W_k(t) & = &  1\{Q_k(t) \geq Q_{th} \} \times \zeta e^{\zeta(Q_k(t) - Q_{th})} \nonumber \\
& & - 1\{Q_k(t) < Q_{th} \}  \times \zeta e^{-\zeta(Q_k(t) - Q_{th})} + 2X_k(t)
\end{eqnarray}
and let 
\[ \widehat{\ell}_k = \left \{ \begin{array}{ll}
0 & \mbox{if} \;\;\; \widehat{E}_k S_k(t) = \SNR_0 \\
\arg\max_{\ell = 1,\ldots, L} \{ R_\ell P_\ell(\widehat{E}_k S_k(t)) \} & \mbox{if} \;\;\; \widehat{E}_k S_k(t) > \SNR_0 
\end{array} \right . \]
{\bf end for}
\end{itemize}
\item Let $k(t)  = \arg\min_{k=1,\ldots,K} \left \{  V \widehat{E}_k -  \left [ W_k(t) \right ]_+ \;  \mu(\widehat{E}_k S_k(t)) \right \}$, 
let $\ell(t) = \widehat{\ell}_{k(t)}$, $E_{k(t)}(t) = \widehat{E}_{k(t)}$  and $E_k(t) = 0$ for all $k \neq k(t)$. 
\item Let sensor $k(t)$ transmit with energy $E_{k(t)}(t)$ and PHY mode $\ell(t)$ on slot $t$. 
\item For the current arrival vector $\Am(t)$ update the queue buffers and scheduler 
weights according to  (\ref{buffer-evo}) and (\ref{X}), respectively. \\
{\bf end for}
\end{itemize}
\end{algorithm}

The dynamic policy of Algorithm 1 requires the on-line computation of the minimum in (\ref{Eopt}) for all $k = 1,\ldots, K$.
For $W_k(t) \leq 0$, (\ref{Eopt})  yields $\widehat{E}_k = \SNR_0/S_k(t)$. 
For $W_k(t) > 0$, using (\ref{mu}), the minimization in  (\ref{Eopt}) is equivalent to calculating
\begin{equation} \label{ziocanebastardo}
\min_{e \in [\SNR_0/S_k(t), \Pc_{\max}]} \; \left \{ V e - \left [ W_k(t) \right ]_+ \;  R_\ell P_\ell(e S_k(t)) \right \}
\end{equation}
for $\ell = 0, \ldots, L$, and choosing the energy value that achieves the overall minimum. 
This can be obtained by performing $L$ one-dimensional line searches~\footnote{For $\ell = 0$ the minimum
is achieved at $e = \SNR_0/S_k(t)$ therefore no line search is needed.}
over the interval $[\SNR_0/S_k(t), \Pc_{\max}]$, with linear complexity in $L$. 
Also, we noticed that by computing (\ref{ziocanebastardo}) at the energy values
$e_\ell = a_\ell / S_k(t)$, where $\{a_\ell\}$ are the SNR thresholds resulting from the calculation 
of $\overline{\Phi}(\lambdav)$, and selecting the PHY mode accordingly, 
the performance of Algorithm 1 is practically indistinguishable from the case where the full line search is performed. 
In fact, if the probabilities $P_\ell(\SNR)$ are indicator functions, then the the minimum of (\ref{ziocanebastardo}) 
must be one of the break-points $\{\SNR_\ell/S_k(t)\}$. Hence, given the shape of the probabilities $P_\ell(\SNR)$ (see Fig.~\ref{PERfig-example}),
it is not surprising that the actual minimum of (\ref{ziocanebastardo}) is very close to the value at
$e_\ell = a_\ell / S_k(t)$.  We conclude that the complexity of the on-line optimization performed 
in Algorithm 1 is $O(L)$  with respect to the number of PHY modes,  and $O(K\log K)$ with respect to the number 
of sensors, due to the presence of the maximization  over $k$.  In practical WBANs, both $L$ and $K$ are small integers, 
therefore these complexity orders are quite irrelevant. As a matter of fact,  Algorithm 1 can be easily implemented in the piconet hub 
(typically a powerful smartphone, equipped with a modern multi-core processor). 

In order to gain insight on the gain achievable by the multi-sensor dynamic scheduling policy given in this section, we compare 
its energy-delay performance with that of a suboptimal system that runs a single-user energy-delay algorithm independently 
for each sensor  (see Section \ref{one-user}), and uses round-robin channel allocation to given to all sensors the same fraction 
of system bandwidth. We considered a piconet with two sensors with same channel state statistics and two setups:
1) symmetric arrival rates $\lambda_1 = \lambda_2$, and deterministic packet arrivals (one packet per slot per sensor); 
2) symmetric arrival rates with different probabilities of packet arrival $q_1=1.0$ and $q_2=0.2$. In setup 2, 
the arrival process of sensor 2 is i.i.d. with geometrically distributed inter-arrival time with
$\PP(A_2(t) = 0) = 1 - q_2$ and $\PP(A_2(t) = \lambda_2/q_2) = q_2$.
The corresponding energy-delay tradeoff curves are shown in Fig. \ref{opportunistic}.
The achievable sum average energy $\overline{\Phi}(1,1)$ yields the horizontal line, and is computed using 
method described in Section \ref{min-pow-K}, for the packet success probabilities $P_\ell(\SNR)$ of 
Bluetooth 2dh3 and 3dh3 PHY modes,  with $\SNR_0 = 8$ dB. 
We notice that the multisensor scheduling policy, which selects opportunistically the sensor to serve on each slot, 
based on the channel state and on the queue state, achieves a significant gain in average transmit energy with respect to a
conventional round-robin piconet scheduling. 
By increasing the control parameter $V$ the average energy tends to
its minimum, while the delay increases. 
We also considered an asymmetric setup with deterministic arrivals (one packet per slot) but 
the two arrival rates are different, namely $\lambda_1 = 0.04$ bit/s/Hz and $\lambda_2 = 1.0$ bit/s/Hz. 
The corresponding energy-delay tradeoff curves are shown in Fig. \ref{opportunistic2}. 
These examples indicate that the energy savings of the multi-sensor dynamic scheduling policy
increase when the sensors are asymmetric, either because of the different inter-arrival 
statistics, and because of the different arrival average rate.

\section{Improved policy with adaptive sleep mode optimization}  \label{switch}

The scheduling policy of Section \ref{opt-scheduling} is near-optimal under the assumption that there is no power expenditure when 
no payload data is transmitted  \cite{Neely-IT07,Berry-Gallager-IT02}. However, as already discussed a few times in this paper, in Bluetooth or similar systems 
a transmitter must  spend power for transmitting NULL control packets even when it has no data to transmit. 
In this section we investigate an adaptive policy that switches 
between sleep and connected modes. We focus first on the single-sensor case, and then discuss the extension to the 
case of $K > 1$. 

By introducing a sleep mode, the transmitter has two distinct states and the scheduling problem 
becomes significantly more complicated.  
A rather general formulation of the problem was addressed in terms of dynamic programming 
over ``renewal intervals'' in \cite{neely2009stochastic}, where 
the event of disconnecting the sensor and moving to sleep mode is a recurrent event that somehow 
``resets'' the system. For the sake of a simple on-line implementation, here we seek a heuristic approach that yields 
significant improvements with respect to the dynamic policy of Algorithm \ref{algo1}.

In order to motivate our approach, we observe the qualitative behavior of the policy of 
Algorithm \ref{algo1} for fixed arrival rate $\lambda$ with different levels of ``burstiness'' of the arrival process. 
Consider Fig. \ref{kazzi}, obtained by Algorithm \ref{algo1} in the case of a bursty arrival 
process $A(t)$ with i.i.d. arrivals, 
arrival rate $\lambda = 0.04$ bit/s/Hz and $\PP(A(t) = \lambda/q) = q$ and $\PP(A(t) = 0) = 1 - q$. 
We notice that the scheduler tends to allocate high power and high rate immediately 
after a non-zero data arrival,  when the queue buffer is large. 
Then, as $Q(t)$ becomes smaller and eventually drops slightly the threshold $Q_{th}$ (see Algorithm \ref{algo1}), the scheduler 
tends to allocate smaller rates and smaller transmit powers. 
Eventually, it uses the minimum power necessary for NULL packets with zero 
payload  data rate, until the next arrival. 

From (\ref{buffer-evo}), (\ref{X}) and (\ref{W}) we have that if  $Q(t) < Q_{th}$ and no arrivals occur at slot $t$, then necessarily  
$X(t + 1) < X(t)$  (i.e., the auxiliary queue has a locally decreasing trajectory).  Furthermore,  when $X(t)$ is sufficiently 
small such that $W(t) \leq 0$, the coefficient $W(t)$ is certainly non-positive until the next arrival.  
It follows that when the algorithm enters the condition $W(t) = 0$, it keeps scheduling NULL slots until the 
next arrival time. Let $t_0$ and $t_1$ denote two consecutive arrival times, and assume that the condition $W(t) = 0$ 
occurs at $t_w$, i.e.,  we define the conditional stopping time
\[ t_w = \min_{t > t_0} \;\; \{ W(t) \leq 0 : \mbox{arrival at time} \; t_0 \}. \]
Assuming that at time $t_1$ the buffer queue jumps above the threshold $Q_{th}$ and the algorithm 
schedules non-NULL slots,  the average energy per symbol spent in the ``idle'' phase is given by 
\begin{eqnarray} \label{conditional-idle-power} 
\overline{E}_{\rm idle}(t_w  ,t_0) 
& = & \EE \left [ \left . \sum_{\tau = t_w}^{t_1} \frac{\SNR_0}{S(\tau)}   \right | t_1 > t_w , t_0\right ] \nonumber \\
& = & \EE\left [ \frac{\SNR_0}{S} \right ]   \EE\left [ \left . t_1 - t_w  + 1 \right | t_1 > t_w,  t_0 \right ] \nonumber \\
& = & \overline{E}_{\rm NULL} \times \Delta(t_w, t_0), 
\end{eqnarray}
where we define $\overline{E}_{\rm NULL} =  \EE\left [ \frac{\SNR_0}{S} \right ]$ as the average energy necessary to 
transmit a NULL packet, and the term $\Delta(t_w, t_0) = \EE\left [ \left . t_1 - t_w  + 1 \right | t_1 > t_w,  t_0 \right ]$ is the conditional 
average idle time  for given $t_w$ and $t_0$.  For example, in the case of deterministic arrivals, which are representative of sensors that collect 
one measurement  of $A_{\max}$ bit/s/Hz each interval of (say) $N$ slots, we have  
\begin{eqnarray} \label{idle-time-determ}
\Delta(t_w, t_0) & = & \EE\left [ \left . t_1 - t_w  + 1 \right | t_1 > t_w,  t_0 \right ] = N - (t_w  - t_0) + 1. 
\end{eqnarray}
In the case of geometrically distributed inter-arrival times with $\PP(A(t) = A_{\max}) = q$,  we have 
\begin{eqnarray} \label{idle-time-geom}
\Delta(t_w, t_0) & = & \EE\left [ \left . t_1 - t_w  + 1 \right | t_1 > t_w,  t_0 \right ] \nonumber \\
& = & \EE\left [ \left . t_1 \right | t_1 > t_w \right ] - t_w + 1 = \frac{1}{q} + 1.
\end{eqnarray}
Assuming that $\Delta(t_w, t_0)$ is known a priori from the system statistics, we propose the following
switching strategy in order to take advantage of the sleeping mode: at each new arrival $t_0$,  
the policy resets.  Focusing on a given inter-arrival period and letting $t_0$ be the starting time (i.e., the time of the most recent arrival), 
if $t_0 < t_w < t_1$ occurs, i.e., the scheduler enters the idle interval before a new arrival occurs, 
then the policy switches to sleep mode if $\overline{E}_{\rm idle}(t_w,t_0) > \overline{E}_{\rm sleep}$, 
where  the latter is the average energy cost of disconnecting and setting up a new connection as soon as a new arrival occurs.
In general,  $\overline{E}_{\rm sleep}$ depends on the paging and hand-shaking procedures, which 
involve the transmission of some control packets. In our results, we let $\overline{E}_{\rm sleep} = \tau \overline{E}_{\rm NULL}$ 
for some  value $\tau > 1$, since we assume that the energy cost of disconnecting and reconnecting at a later time
is some multiple of the NULL control packet average energy.   
Then, the improved policy switches to sleeping mode if  $\Delta(t_w, t_0) > \tau$. 
Fig. \ref{switch-fig} compares the energy-delay tradeoff obtained by the proposed switching policy for different values of 
$\tau$, for the same arrival statistics of the snapshots of Fig.~\ref{kazzi}. The case $\tau = 10$ corresponds to 
no switching because the cost of disconnecting and reconnecting is equal to the expected cost of NULL packets in idle times.
In this case, the curve coincides with what obtained by Algorithm \ref{algo1}. 

In order to extend this idea to the case of $K > 1$ sensors, we observe that when there exists at least one sensor 
with data to transmit in its queue buffer, it is very likely that the hub will poll this sensor. 
In this case, even though some sensors are idle,  they are unlikely to be polled and therefore they need not send 
NULL packets. Instead, if all sensors have below-threshold buffers, the minimization in (\ref{Eopt}) yields 
$\widehat{E}_k = \SNR_0/S_k(t)$ for all $k$.  
Hence, the sensor with the largest value of channel state $S_k(t)$ will be polled and has to respond 
with a NULL packet.  For independent identically distributed stationary and ergodic fading processes across the sensors, 
each sensor is equally likely to have the largest fading state, therefore, all sensors have the same probability $1/K$ 
of sending a NULL packet. 
An example of this situation is evidenced by the snapshot of Fig. \ref{kazzi2}, obtained by applying Algorithm \ref{algo1} in 
the case of $K = 2$ sensors,  with i.i.d. channel state and arrival processes, each of which has the same statistics 
as in Fig.~\ref{kazzi}.  Based on these observations, for $K > 1$ we define $t_0$ to be the time of the last data arrival over all sensors, 
and the stopping time $t_w$ as
\[ t_w = \min_{t > t_0} \;\; \{ W_k(t) \leq 0 \;\; \forall \; k : \mbox{last arrival at time} \; t_0 \}.  \]
In words, $t_w$ is the first time where all sensors become idle since the last packet arrival in the whole piconet.
Then, each sensor $k$ calculates an estimate of the expected energy spent to send NULL packets as
$\overline{E}_{{\rm idle},k}(t_w  ,t_0)  = \frac{\overline{E}_{\rm NULL}}{K} \times  \Delta_k(t_w, t_0)$, where 
$\Delta_k(t_w, t_0)$ is defined as in (\ref{idle-time-geom}) for the arrival process of sensor $k$, and
where the factor $1/K$ is due to the fact that, when all sensors are idle, 
they transmit NULL packets for $1/K$ of the slots, on average. 
Each sensor $k$ switches to sleep mode if $\overline{E}_{{\rm idle},k}(t_w  ,t_0) > \overline{E}_{\rm sleep}$. 
Suppose that at time $t_w$ some sensor switches to sleep mode. Then, at time $t_w+1$, if the condition 
$W_k(t_w+1) \leq 0$ still holds, then the same decision is repeated for the surviving network 
of $K' < K$ still connected sensors, by decrementing $\Delta_k(t_w, t_0)$ by one slot unit. The switching decision process 
repeats for $t_w+2, t_w + 3, \ldots$, until some new packet arrives into the network. 
Sleeping sensors turn ``on'' again when they receive a new non-zero data packet and their weights become positive again.

We expect that for $K > 1$ the proposed switching policy can achieve significant gains 
over the standard dynamic scheduling policy of Algorithm \ref{algo1} when all arrival processes 
are bursty. In contrast, if one or more sensors have full queue buffers 
most of the time, then the switching condition does not occur. In this case, sensors are rarely 
requested to send NULL packets, since there exists some sensor with data to send with 
high probability, and therefore the hub will likely poll these sensors. 

This intuition is confirmed by numerical simulations.  Fig.~\ref{switch-fig2} shows the energy-delay tradeoff of the proposed switching policy 
for $K = 2$ symmetric sensors with arrival statistics as in Fig.~\ref{kazzi2}. Fig.~\ref{switch-fig3} shows analogous curves for an asymmetric case
where $\lambda_1 = 0.04$ bit/s/Hz, $q_1 = 0.1$,  and $\lambda_2 = 1.0$ bit/s/Hz and $q_2 = 1.0$ (deterministic arrivals). 
The switching policy achieves significant gains over the non-switching policy in the case of Fig.~\ref{switch-fig2}, where both sensors
have bursty arrival processes. Instead, in the case of Fig.~\ref{switch-fig3}, sensor 2 has always data to transmit, and
the switching conditions never occurs. On the other hand, sensor 1 does not waste energy to send NULL packets 
during its own idle times, since sensor 2 is polled with high probability instead. 


\section{Conclusions} \label{conclusions}

We studied the minimum energy function and the energy-delay tradeoff in a wireless network 
inspired by  a Bluetooth piconet with $K$ sensors and a data-collection hub. 
This problem is motivated by body-area wireless sensor networks, where 
power efficiency plays a key role, but critical data must be delivered with low delay. 
This paper explicitly considers system features such as a finite (and typically small) number of possible 
data rates,  simple block codes (or even uncoded modulation) with non-negligible packet error probabilities, 
and the cost of protocol overhead packets (NULL packets) in order to maintain a connection in the absence of payload data. 
In this context, we applied a known general theory of near-optimal scheduling for energy-delay tradeoff
and have obtained an easily computable solution for the minimum energy function in the case where the codes packet error probabilities
are 0 if the receiver SNR is above some code-dependent threshold, and 1 elsewhere. This sharp transition behavior is relevant for very powerful codes
and large block length. In the case of moderate block length and/or of weak codes (such as in the Bluetooth-like system considered here), 
the packet error probabilities are smooth functions of the received SNR. In this case, we developed an easily computable achievable upper bound
and a lower bound to the minimum energy function.
Furthermore, we provided an explicit scheduling algorithm that provably operates along the optimal power-delay tradeoff, 
and at each slot performs sensor selection, power allocation and
transmission mode (and rate) selection. 
Finally, we proposed a novel improved strategy for the case where the transmission of NULL control packets has a 
non-zero power cost (this is in fact the most relevant and practical case). 
Our improved strategy adaptively switches the sensor to sleep mode, by comparing the conditional expected 
cost of maintaining the connection, given the present state, with the average cost of disconnecting and 
re-connecting later on. 

The proposed algorithms are amenable for an on-line implementation 
at the hub node, since in a Bluetooth-like piconet the hub node polls the sensors, and 
our strategy only requires a different ``opportunistic'' polling order, that depends on the sensors queue state and on the channel state. The problem of learning the channel fading gains at the transmitter to a sufficient level of accuracy goes beyond the scope of this work,  and should be addressed by some suitable two-way training scheme, exploiting time-division duplex operations, 
reciprocity and the rather large channel time-frequency correlation. 

\appendix

\section{Proofs of auxiliary results}  \label{proofs}

\subsection{Proof of Lemma \ref{deterministic-E}}

Using iterated expectation, we write the objective function in  (\ref{K-user-Phi-smooth2}) as
\[ \EE \left [ \sum_{k,\ell} w_{k,\ell}(\Sm) \Ec_{k, \ell}(\Sm)  \right ] = \EE \left [ \sum_{k,\ell} w_{k,\ell}(\Sm) \EE[ \Ec_{k, \ell}(\Sm) |\Sm] \right ]. \]
Using the concavity of $\overline{\mu}(\SNR)$, iterated expectation and Jensen's inequality in the terms appearing in the 
stability constraints we obtain
\begin{eqnarray}
\EE \left [ \sum_{\ell} w_{k,\ell}(\Sm) \overline{\mu} (\Ec_{k,\ell}(\Sm) S_k)   \right ] 
& = &  \EE \left [ \sum_{\ell} w_{k,\ell}(\Sm) \EE [ \overline{\mu} (\Ec_{k,\ell}(\Sm) S_k) | \Sm]   \right ] \nonumber \\
& \leq &  \EE \left [ \sum_{\ell} w_{k,\ell}(\Sm) \overline{\mu} (\EE [  \Ec_{k,\ell}(\Sm)| \Sm]  S_k)   \right ]. 
\end{eqnarray}
Hence, for any probability assignment $p_{\Ec_\pi |\Kc_\pi, \Lc_\pi, \Sm}( e | k,\ell,\sv)$ for $\Ec_{k,\ell}(\sv)$ we can 
find the feasible deterministic energy function $e_{k,\ell}(\sv) = \EE[\Ec_{k,\ell}(\Sm) | \Sm = \sv]$ that yields a smaller or equal 
value of the problem. 

\subsection{Proof of Lemma \ref{indicators}}

First, notice that under the assumption of Lemma \ref{indicators} the upper concave envelope $\overline{\mu}(\SNR)$
coincides with the piecewise linear function joining the points  $(a_\ell, R_\ell)$ for $\ell = 0, \ldots, L+1$. 
For any feasible $\Wm(\sv)$ and randomized energy allocation $\Ec_{k,\ell}(\sv)$ with conditional distribution
$p_{\Ec_\pi |\Kc_\pi, \Lc_\pi, \Sm}( e | k,\ell,\sv)$, define the quantized energy allocation
\begin{equation} \label{quantized-energy}
\Xi_{k,\ell}(\sv)  = \left \{ \begin{array}{ll} 
\frac{a_\ell}{s_k}, &  \mbox{for} \;\; \Ec_{k,\ell}(\sv) \geq \frac{a_\ell}{s_k} \\
\frac{a_0}{s_k}, & \mbox{for} \;\; \Ec_{k,\ell}(\sv) < \frac{a_\ell}{s_k}. \end{array} \right . 
\end{equation}
Notice that the equality $1\{ \Ec_{k,\ell}(\sv) s_k \geq a_\ell \} =  1\{ \Xi_{k,\ell}(\sv) s_k \geq a_\ell \}$ holds with probability 1 
for all $k, \ell$ and $\sv$.
Then, the terms in the stability constraints of (\ref{K-user-Phi-smooth1}) are preserved by 
replacing $\Ec_{k,\ell}(\sv)$ with $\Xi_{k,\ell}(\sv)$. Furthermore, since $\PP(\Xi_{k,\ell}(\sv) \leq \Ec_{k,\ell}(\sv)) = 1$ for 
all $k,\ell$ and $\sv$, we have that the objective function in (\ref{K-user-Phi-smooth1}) cannot increase
by replacing $\Ec_{k,\ell}(\sv)$ with $\Xi_{k,\ell}(\sv)$. Hence, the optimal 
energy allocation must be in the quantized form (\ref{quantized-energy}) with at most two mass points, one 
at $\frac{a_\ell}{s_k}$ and  the other at $\frac{a_0}{s_k}$, with probabilities indicated by 
$\xi_{k,\ell}(\sv)$ and $1 - \xi_{k,\ell}(\sv)$, 
respectively. 

When restricting 
the energy allocation to the two-mass point random variable (\ref{quantized-energy}), 
the objective function in (\ref{K-user-Phi-smooth1}) takes on the form
\begin{eqnarray} \label{obj-indicator}
\EE \left [ \sum_{k,\ell} w_{k,\ell}(\Sm) \Xi_{k, \ell}(\Sm) \right ] 
& = & 
\EE \left [ \sum_k \widetilde{w}_k(\Sm) \sum_{\ell=0}^L \frac{w_{k,\ell}(\Sm)}{\widetilde{w}_k(\Sm)} \left ( \frac{a_0}{S_k} (1 - \xi_{k,\ell}(\Sm)) + \frac{a_\ell}{S_k}
\xi_{k,\ell}(\Sm) \right ) \right ] \nonumber \\
& = & 
\EE \left [ \sum_k \widetilde{w}_k(\Sm) \left \{ \left ( \frac{w_{k,0}(\Sm)}{\widetilde{w}_k(\Sm)} + \sum_{\ell=1}^L \frac{w_{k,\ell}(\Sm)}{\widetilde{w}_k(\Sm)} (1 - \xi_{k,\ell}(\Sm)) \right ) \frac{a_0}{S_k} \right . \right . \nonumber \\
& & \left . \left . +  \sum_{\ell=1}^L \frac{w_{k,\ell}(\Sm)}{\widetilde{w}_k(\Sm)} \xi_{k,\ell}(\Sm)  \frac{a_\ell}{S_k} \xi_{k,\ell}(\Sm) \right \} \right ] \nonumber \\
& = & 
\EE \left [ \sum_k \widetilde{w}_k(\Sm) \sum_{\ell=0}^L \widetilde{w}_{\ell | k}(\Sm) \frac{a_\ell}{S_k}  \right ],  
\end{eqnarray}
where $\widetilde{w}_k(\sv) = \sum_{\ell=0}^L w_{k,\ell}(\sv)$ and
where we define the conditional probability mass function $\widetilde{w}_{\ell | k}(\sv)$ of choosing PHY mode $\ell$ given that
the sensor $k$ is scheduled, and the channel state is equal to $\sv$, given by 
\begin{eqnarray*}
\widetilde{w}_{0|k}(\sv) & = &  \frac{w_{k,0}(\sv)}{\widetilde{w}_k(\sv)} + \sum_{\ell=1}^L \frac{w_{k,\ell}(\sv)}{\widetilde{w}_k(\sv)} (1 - \xi_{k,\ell}(\sv)) \\
\widetilde{w}_{\ell | k}(\sv) & = & \frac{w_{k,\ell}(\Sm)}{\widetilde{w}_k(\Sm)} \xi_{k,\ell}(\Sm), \;\;\; \mbox{for} \;\; \ell > 0.
\end{eqnarray*}
The terms appearing in the stability constraints can be written as
\begin{eqnarray} \label{constr-indicator}
\EE \left [ \sum_{\ell} w_{k,\ell}(\Sm) R_\ell 1\{ \Xi_{k,\ell}(\Sm) S_k \geq a_\ell\}   \right ] 
& = & \EE \left [ \widetilde{w}_k(\Sm) \sum_{\ell = 1}^L  \frac{w_{k,\ell}(\Sm)}{\widetilde{w}_k(\Sm)} \xi_{k,\ell}(\Sm) R_\ell   \right ] \nonumber \\
& = & \EE \left [ \widetilde{w}_k(\Sm) \sum_{\ell = 0}^L  \widetilde{w}_{\ell | k}(\Sm) R_\ell   \right ].
\end{eqnarray}
At this point, we have reduced the solution of (\ref{K-user-Phi-smooth1}) to the determination of the conditional probability mass function $\widetilde{w}_k(\sv) \widetilde{w}_{\ell | k}(\sv)$ of $\Kc_\pi, \Lc_\pi$ given the channel state $\Sm  = \sv$, 
where the energy allocation for given $k,\ell$ and $\sv$ is deterministically given by $e_{k,\ell}(\sv) = a_\ell/s_k$. 
This provides a direct proof of the second statement of the lemma. 

In order to prove the tightness of the relaxed problem, we first show that the optimal 
$\widetilde{w}_{\ell | k}(\sv)$ is ``concentrated'' with at most two probability masses, 
at two consecutive indices $\ell, \ell+1$.   For any feasible $\widetilde{w}_k(\sv) \widetilde{w}_{\ell | k}(\sv)$, let 
\begin{eqnarray} 
E_k(\sv) & = & \sum_\ell \widetilde{w}_{\ell | k}(\sv) \frac{a_\ell}{s_k} \nonumber \\
\mu_k(\sv) & = & \sum_\ell \widetilde{w}_{\ell | k}(\sv) R_\ell.  
\end{eqnarray}
For any fixed $\sv$, by construction, the point $(E_k(\sv) , \mu_k(\sv))$ is inside the convex hull of the points
$(a_\ell/s_k, R_\ell)$ for $\ell = 0,\ldots, L$. Hence, there exists a probability mass function $\widehat{w}_{\ell | k}(\sv)$ that
has only two mass points at consecutive indices $\ell(k,\sv), \ell(k,\sv) + 1$ for some $\ell(k,\sv)$ to be determined below, 
yielding the same stability constraint and  a smaller average energy. The situation is shown qualitatively in Fig.~\ref{convex-hull}. 
The two-points probability mass function is constructed as 
follows: let $\ell(k,\sv)$ denote the index in $0, \ldots, L$ such that $R_\ell(k,\sv) \leq \mu_k(\sv) < R_{\ell(k,\sv)+1}$. 
Then, we let
\begin{equation} \label{two-mass} 
\widehat{w}_{\ell | k}(\sv) = \left \{
\begin{array}{ll}
\frac{R_{\ell(k,\sv)+1} - \mu_k(\sv)}{R_{\ell(k,\sv)+1} - R_{\ell(k,\sv)}}, & \mbox{for} \;\; \ell = \ell(k,\sv) \\
\frac{\mu_k(\sv) - R_{\ell(k,\sv)}}{R_{\ell(k,\sv)+1} - R_{\ell(k,\sv)}}, & \mbox{for} \;\; \ell = \ell(k,\sv) + 1\\
0 & \mbox{elsewhere.} \end{array} \right . 
\end{equation}
It can be immediately checked that this probability yields the average energy $\widehat{E}_k(\sv)$ given by the intersect of
the segment of straight line joining the points $(a_{\ell(k,\sv)}/s_k, R_{\ell(k,\sv)})$ and 
$(a_{\ell(k,\sv)+1}/s_k, R_{\ell(k,\sv)+1})$ with the horizontal line at level $\mu_k(\sv)$. 
Since the straight line segment is part of the boundary of the convex hull (see Fig.~\ref{convex-hull}), then 
$\widehat{E}_k(\sv) \leq E_k(\sv)$. Notice also that the union of these straight line segments coincides with the 
function $\overline{\mu}(e s_k)$ (function of the dummy energy variable $e$ on the energy-rate plane, for fixed $s_k$), where
$\overline{\mu}(\SNR)$  is the already defined upper concave envelope of 
$\mu(\SNR) = \max_\ell \{ R_\ell 1\{ \SNR \geq a_\ell \}\}$.  
 
\begin{figure}[ht]
\centerline{\includegraphics[width=12cm]{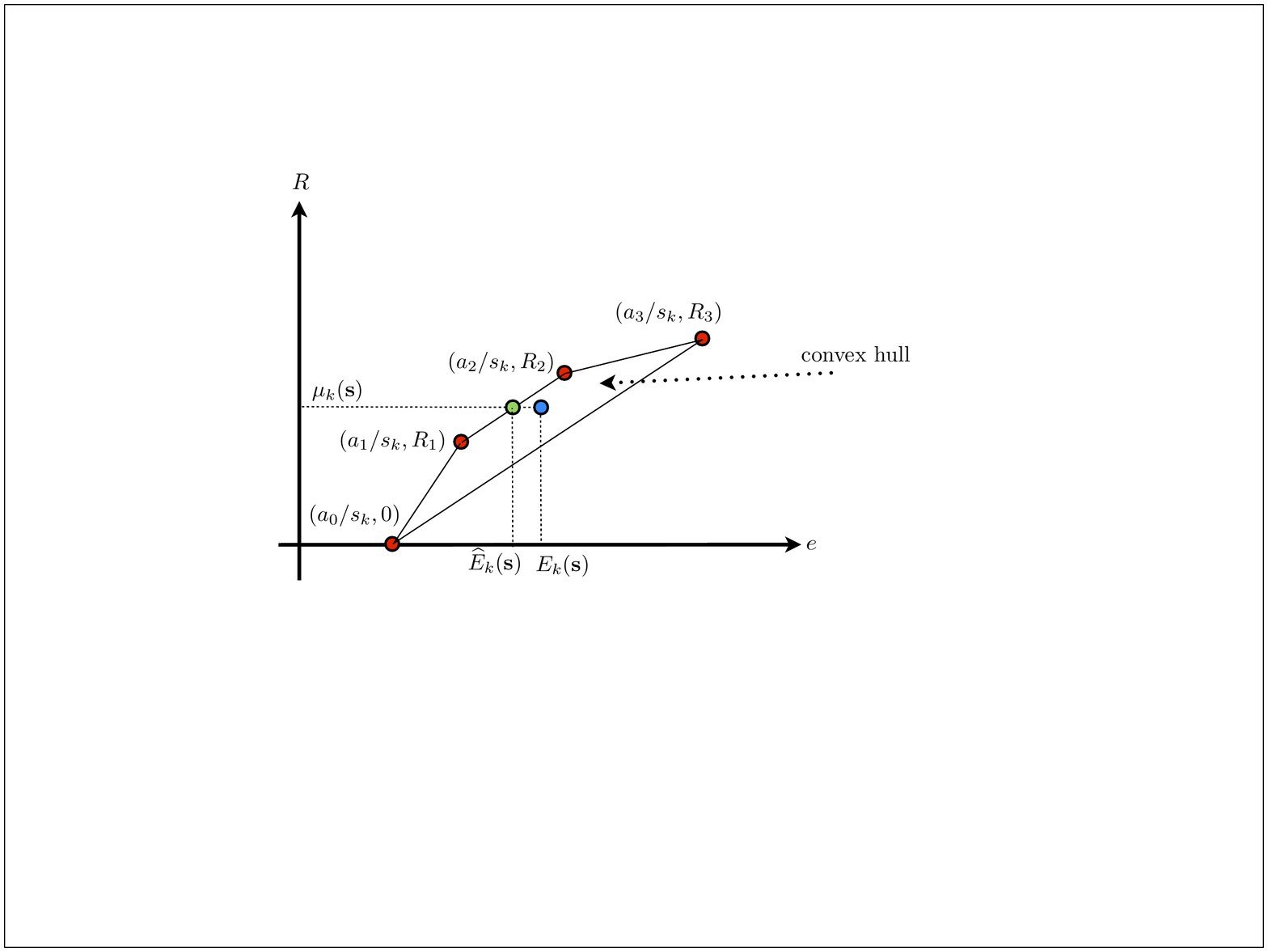}}
\caption{Convex-hull argument in the proof of Lemma \ref{indicators}: for any probability assignment achieving
the point $(E_k(\sv), \mu_k(\sv))$ (in blue) on the energy-rate plane, there exists a two-mass probability assignment
achieving the point $(\widehat{E}_k(\sv), \mu_k(\sv))$ (in green), with $\widehat{E}_k(\sv) \leq E_k(\sv)$.
In the example of the figure, this is obtained by the convex combination of the extreme points
$(a_1/s_k, R_1)$ and $(a_2/s_k, R_2)$, i.e., in this case we have $\ell(k,\sv) = 1$.}
\label{convex-hull}
\end{figure}

The proof of Lemma \ref{indicators} is concluded by observing that, for the probability assignment
$\widehat{w}_{\ell | k}(\sv)$ defined in (\ref{two-mass}), the objective function and stability constraint terms of 
the original problem (\ref{K-user-Phi-smooth1}) can be written as
\[ \EE\left [ \sum_k \widetilde{w}_k(\Sm) \widehat{E}_k(\Sm) \right ], \]
and as
\[ \EE \left [ \widetilde{w}_k(\Sm) \overline{\mu}(\widehat{E}_k(\Sm) S_k) \right ], \]
respectively. Hence, (\ref{K-user-Phi-smooth1}) is equivalent to the problem
\begin{eqnarray} 
\mbox{minimize} & &  \EE \left [ \sum_{k} \widetilde{w}_{k}(\Sm) \widehat{E}_{k}(\Sm)  \right ]  
\label{K-user-Phi-modified-indicators} \\
\mbox{subject to} & & \EE \left [ \widetilde{w}_{k}(\Sm) \overline{\mu}(\widehat{E}_{k}(\Sm) S_k ) \right ]  \geq \lambda_k, \;\;\; \forall \; k = 1, \ldots, K, \nonumber \\
& & \onev^\transp \widetilde{\wv}(\sv) = 1, \;\;\; \forall \;\; \sv \in \RR_+^K \nonumber \\
& & s_k \widehat{E}_{k}(\sv) \geq \SNR_0, \;\;\; \forall \;\; \sv \in \RR_+^K, \;\; k = 1,\ldots,K, \nonumber \\
& & \widetilde{w}_{k}(\sv)  \geq 0, \;\;\; \forall \;\; \sv \in \RR_+^K, \;\; k = 1,\ldots,K, \nonumber
\end{eqnarray}
given directly in terms of the probability vector function $\widetilde{\wv}(\sv)$ and of the energy allocation function
$\widehat{E}_k(\sv)$.  The equivalence of (\ref{K-user-Phi-modified-indicators}) with the 
the relaxed problem (\ref{K-user-Phi-smooth2}), and therefore the proof of the first statement of Lemma \ref{indicators}, is established by the following general result:

\begin{lemma} \label{equivalence}
Problem (\ref{K-user-Phi-smooth2}) yields the same value as the reduced dimensional problem
\begin{eqnarray} 
\mbox{minimize} & &  \EE \left [ \sum_{k} \widetilde{w}_{k}(\Sm) \widetilde{e}_{k}(\Sm)  \right ]  \label{K-user-Phi-modified} \\
\mbox{subject to} & & \EE \left [ \widetilde{w}_{k}(\Sm) \overline{\mu}(\widetilde{e}_{k}(\Sm) S_k ) \right ]  \geq \lambda_k, \;\;\;\forall \; k = 1, \ldots, K, \nonumber \\
& & \onev^\transp \widetilde{\wv}(\sv) = 1, \;\;\; \forall \;\; \sv \in \RR_+^K \nonumber \\
& & s_k \widetilde{e}_{k}(\sv) \geq \SNR_0, \;\;\; \forall \;\; \sv \in \RR_+^K, \;\; k = 1,\ldots,K, \nonumber \\
& & \widetilde{w}_{k}(\sv)  \geq 0, \;\;\; \forall \;\; \sv \in \RR_+^K, \;\; k = 1,\ldots,K, \nonumber
\end{eqnarray}
where $\widetilde{\wv}(\sv) = (\widetilde{w}_1(\sv), \ldots, \widetilde{w}_K(\sv))^\transp$ and
$\widetilde{\ev}(\sv) = (\widetilde{e}_1(\sv), \ldots, \widetilde{e}_K(\sv))^\transp$ are a probability and an energy allocation 
vectors, deterministic functions of the channel state. \hfill \QED
\end{lemma}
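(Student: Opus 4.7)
\textbf{Proof proposal for Lemma \ref{equivalence}.}
The plan is to prove the two inequalities between the optimal values of (\ref{K-user-Phi-smooth2}) and (\ref{K-user-Phi-modified}) separately, with the harder direction relying on a Jensen-type argument enabled by the concavity of $\overline{\mu}$.

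First, I would show that any feasible pair $(\widetilde{\wv}(\sv), \widetilde{\ev}(\sv))$ for (\ref{K-user-Phi-modified}) embeds into a feasible solution of (\ref{K-user-Phi-smooth2}) with the same objective value. Concretely, set $w_{k,0}(\sv) = \widetilde{w}_k(\sv)$ and $w_{k,\ell}(\sv) = 0$ for $\ell \geq 1$, and take $\Ec_{k,0}(\sv) = \widetilde{e}_k(\sv)$ with $\Ec_{k,\ell}(\sv) = \SNR_0/s_k$ for $\ell \geq 1$ (so the SNR floor holds trivially and the unused terms contribute nothing). The simplex, positivity and minimum-SNR constraints are immediate, the objective matches exactly, and the rate constraint reduces to $\EE[\widetilde{w}_k(\Sm) \overline{\mu}(\widetilde{e}_k(\Sm) S_k)] \geq \lambda_k$. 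This direction shows that the optimal value of (\ref{K-user-Phi-smooth2}) is no larger than that of (\ref{K-user-Phi-modified}).

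The more substantive direction is the reverse inequality. By Lemma \ref{deterministic-E}, I may assume without loss of generality that the solution of (\ref{K-user-Phi-smooth2}) uses a deterministic energy allocation $e_{k,\ell}(\sv)$. I then aggregate over $\ell$: define $\widetilde{w}_k(\sv) = \sum_\ell w_{k,\ell}(\sv)$ and, on the set where $\widetilde{w}_k(\sv) > 0$, the convex combination
\begin{equation*}
\widetilde{e}_k(\sv) = \sum_{\ell=0}^L \frac{w_{k,\ell}(\sv)}{\widetilde{w}_k(\sv)} \, e_{k,\ell}(\sv),
\end{equation*}
with $\widetilde{e}_k(\sv) = \SNR_0/s_k$ on the null set $\{\widetilde{w}_k(\sv) = 0\}$. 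The objective value is preserved identically since $\sum_\ell w_{k,\ell}(\sv) e_{k,\ell}(\sv) = \widetilde{w}_k(\sv) \widetilde{e}_k(\sv)$. The minimum-SNR constraint $s_k \widetilde{e}_k(\sv) \geq \SNR_0$ holds because $\widetilde{e}_k(\sv)$ is a convex combination of values already satisfying that bound. For the rate constraints, I apply Jensen's inequality to the concave function $\overline{\mu}$ with the conditional distribution $w_{k,\ell}(\sv)/\widetilde{w}_k(\sv)$:
\begin{equation*}
\sum_{\ell=0}^L \frac{w_{k,\ell}(\sv)}{\widetilde{w}_k(\sv)} \, \overline{\mu}\bigl(e_{k,\ell}(\sv) s_k\bigr) \leq \overline{\mu}\bigl(\widetilde{e}_k(\sv) s_k\bigr).
\end{equation*}
Multiplying by $\widetilde{w}_k(\sv)$ and taking expectation shows
$\lambda_k \leq \EE\bigl[\sum_\ell w_{k,\ell}(\Sm) \overline{\mu}(e_{k,\ell}(\Sm) S_k)\bigr] \leq \EE\bigl[\widetilde{w}_k(\Sm) \overline{\mu}(\widetilde{e}_k(\Sm) S_k)\bigr]$, so $(\widetilde{\wv}, \widetilde{\ev})$ is feasible for (\ref{K-user-Phi-modified}) with identical cost. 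This yields the other inequality.

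The main obstacle I anticipate is being careful about the direction of the Jensen inequality: since the rate appears on the left-hand side of a $\geq$ constraint, I need the concavity of $\overline{\mu}$ to \emph{loosen} (not tighten) the constraint, which is exactly what happens here because $\EE[\overline{\mu}(X)] \leq \overline{\mu}(\EE[X])$. A secondary technicality is to handle the measure-zero set $\{\widetilde{w}_k(\sv) = 0\}$ when defining $\widetilde{e}_k(\sv)$, which I dispose of by a harmless default value that satisfies the SNR floor and is multiplied by zero weight. Both difficulties are routine and do not affect the structure of the argument.
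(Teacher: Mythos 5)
Your proof is correct and follows essentially the same route as the paper: reduce to deterministic energies via Lemma \ref{deterministic-E}, aggregate over $\ell$ with the conditional weights $w_{k,\ell}/\widetilde{w}_k$, and use Jensen's inequality on the concave $\overline{\mu}$ for one direction, with a trivial feasible embedding for the converse. The only cosmetic difference is that you embed the reduced solution by placing all probability mass on $\ell=0$, while the paper spreads it uniformly as $w_{k,\ell}(\sv)=\widetilde{w}_k(\sv)/(L+1)$; both are valid.
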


\begin{proof}
By Lemma \ref{deterministic-E}, we know that it is sufficient to consider deterministic energy allocation functions denoted by $e_{k,\ell}(\sv)$.  
For any feasible $\Wm(\sv)$ and $e_{k,\ell}(\sv)$ in (\ref{K-user-Phi-smooth2}), define
$\widetilde{w}_k(\sv) = \sum_{\ell=0}^L w_{k,\ell}(\sv)$ and write the objective function in (\ref{K-user-Phi-smooth2}) as
\begin{eqnarray} 
\EE \left [ \sum_{k,\ell} w_{k,\ell}(\Sm) e_{k, \ell}(\Sm)  \right ] 
& = & \EE \left [ \sum_{k} \widetilde{w}_k(\Sm) \sum_\ell \frac{w_{k,\ell}(\Sm)}{\widetilde{w}_k(\Sm)} e_{k, \ell}(\Sm)  \right ] \nonumber \\
& = & \EE \left [ \sum_{k} \widetilde{w}_k(\Sm) \widetilde{e}_k(\Sm) \right ] 
\end{eqnarray}
where $\widetilde{e}_k(\sv) = \sum_\ell \frac{w_{k,\ell}(\sv)}{\widetilde{w}_k(\sv)} e_{k, \ell}(\sv)$. 
Also, the terms in the rate constraints can be upper bounded as
\begin{eqnarray}
\EE \left [ \sum_{\ell} w_{k,\ell}(\Sm) \overline{\mu} (e_{k,\ell}(\Sm) S_k)   \right ] 
& = &  \EE \left [ \widetilde{w}_k(\Sm) \sum_{\ell} \frac{w_{k,\ell}(\Sm)}{\widetilde{w}_k(\Sm)} \overline{\mu} (e_{k,\ell}(\Sm) S_k)  \right ] \nonumber \\
& \leq &  \EE \left [ \widetilde{w}_k(\Sm) \overline{\mu} \left ( \sum_{\ell} \frac{w_{k,\ell}(\Sm)}{\widetilde{w}_k(\Sm)}  e_{k,\ell}(\Sm) S_k \right )  \right ] \nonumber \\
& = &  \EE \left [ \widetilde{w}_k(\Sm) \overline{\mu} \left ( \widetilde{e}_{k}(\Sm) S_k \right )  \right ],  
\end{eqnarray}
where we used Jensen's inequality.
This shows that the value of problem (\ref{K-user-Phi-modified}) lower bounds the value of problem (\ref{K-user-Phi-smooth2}).
Conversely, problem (\ref{K-user-Phi-modified}) corresponds to a specific feasible choice of 
$\Wm(\sv)$ and $e_{k,\ell}(\sv)$ in the original problem (\ref{K-user-Phi-smooth2}), 
in particular,  by letting $w_{k,\ell}(\sv) = \widetilde{w}_k(\sv) /(L+1)$ and $e_{k,\ell}(\sv) = \widetilde{e}_k(\sv)$, 
for all $\ell = 0,\ldots, L$, $k = 1,\ldots, K$ and $\sv \in \RR_+^K$. 
This concludes the proof.
\end{proof}

\newpage

\bibliographystyle{IEEEtran}
\bibliography{body-area}

\clearpage

\begin{table}[t]
\begin{center}
\begin{tabular}{l|l}
\hline \hline
Sensor type & Throughput (kb/s) \\ \hline
SpO2 & 0.01 -- 0.1 \\
Glucose & 0.01 -- 0.1 \\
Blood pressure & 0.01 -- 10\\
ECG & 10 -- 100 (12-bit, 300 Hz, $\times$20) \\
EEG & 10 -- 200 (6kb/s, $\times$32) \\
EMG & 10 -- 1500 (16-bit, 8 kHz, $\times$12)
\end{tabular}
\end{center}
\caption{Throughput requirements for medical sensing devices}
\label{table}
\end{table}

\begin{table}[t]
\begin{center}
\begin{tabular}{l|l|l|l}
\hline \hline
Raw Data \\ Rate (Mbps) & Slots Occupied &  No. bits in payload & Notation \\ \hline
2 & 1 & 464 & 2dh1\\
2 & 3 & 2968 & 2dh3 \\
2 & 5 & 5464 & 2dh5 \\
3 & 1 & 696 & 3dh1\\
3 & 3 & 4448 & 3dh3 \\
3 & 5 & 8200 & 3dh5 \\
\end{tabular}
\end{center}
\caption{Payload for each of the enhanced data rate modes for different time slots. The indication xdhy denotes
an {\em uncoded} transmission mode with raw spectral efficiency of x bit/s/Hz, and payload spanning y slots.}
\label{PacketSlots}
\end{table}

\begin{figure}[ht]
\centerline{\includegraphics[width=12cm]{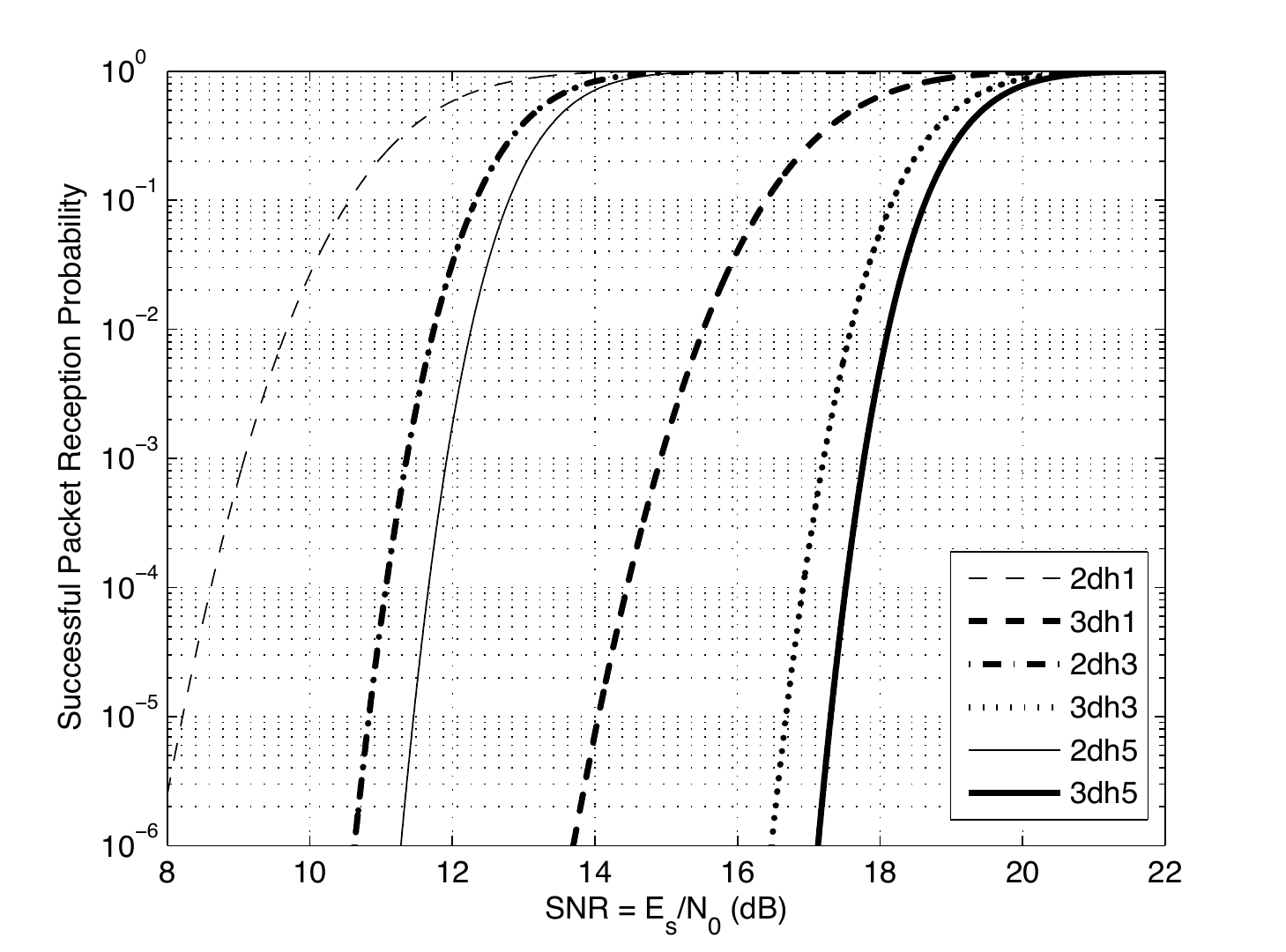}}
\caption{Packet success probability for enhanced data rates EDR2 and EDR3 for all packet lengths under AWGN.}
\label{PERfig}
\end{figure}

\begin{figure}[ht]
\centerline{\includegraphics[width=12cm]{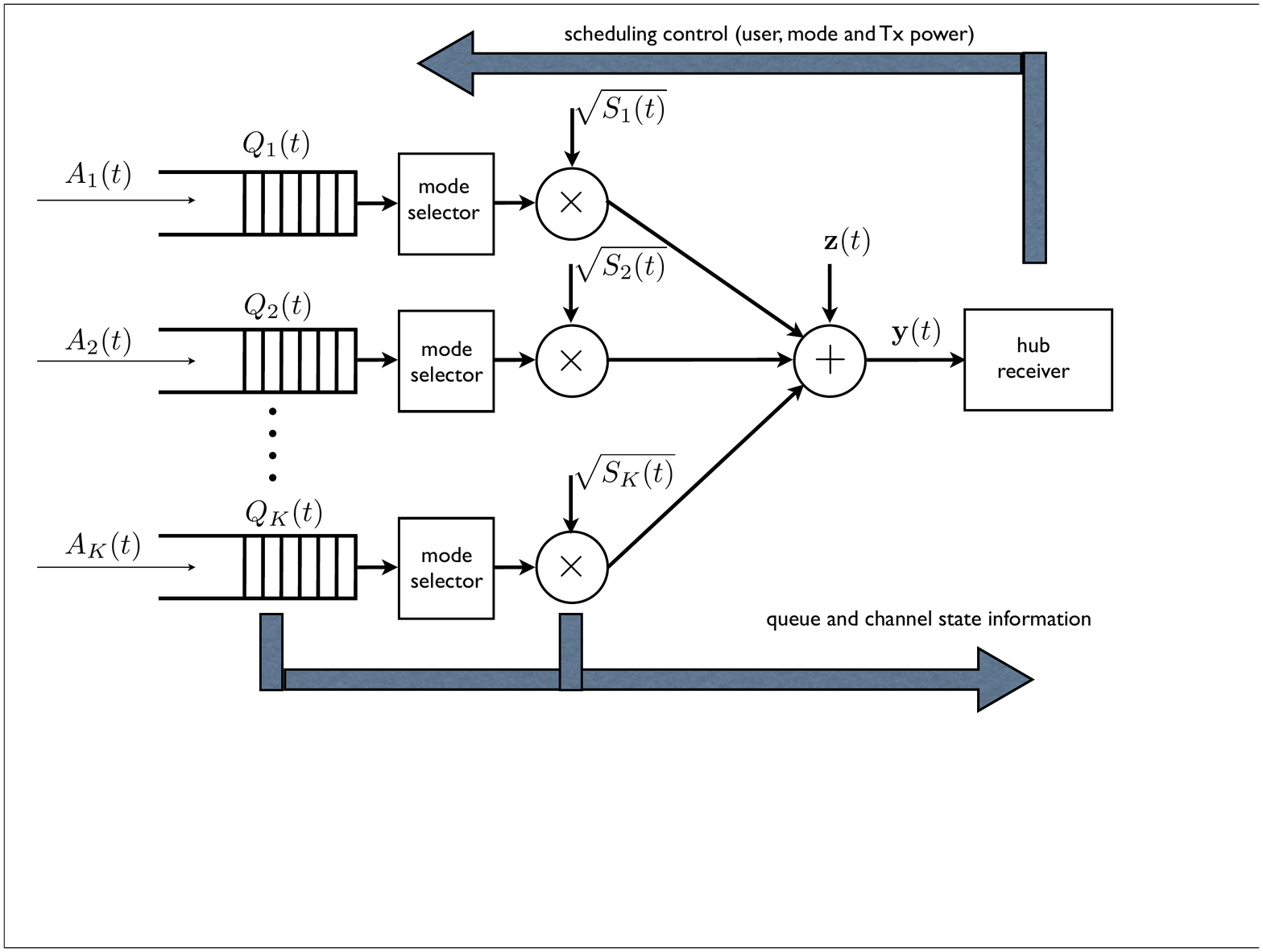}}
\caption{Block diagram of the system under consideration consisting of $K$ sensors with arrival rates $A_k(t)$, each equipped with 
buffer of length $Q_k(t)$ at time $t \in (0,\infty)$, $k \in [1,...,K]$, where each of the sensors experiences independent fading realizations $S_k(t)$.}
\label{system1}
\end{figure}

\begin{figure}[ht]
\centerline{\includegraphics[width=12cm]{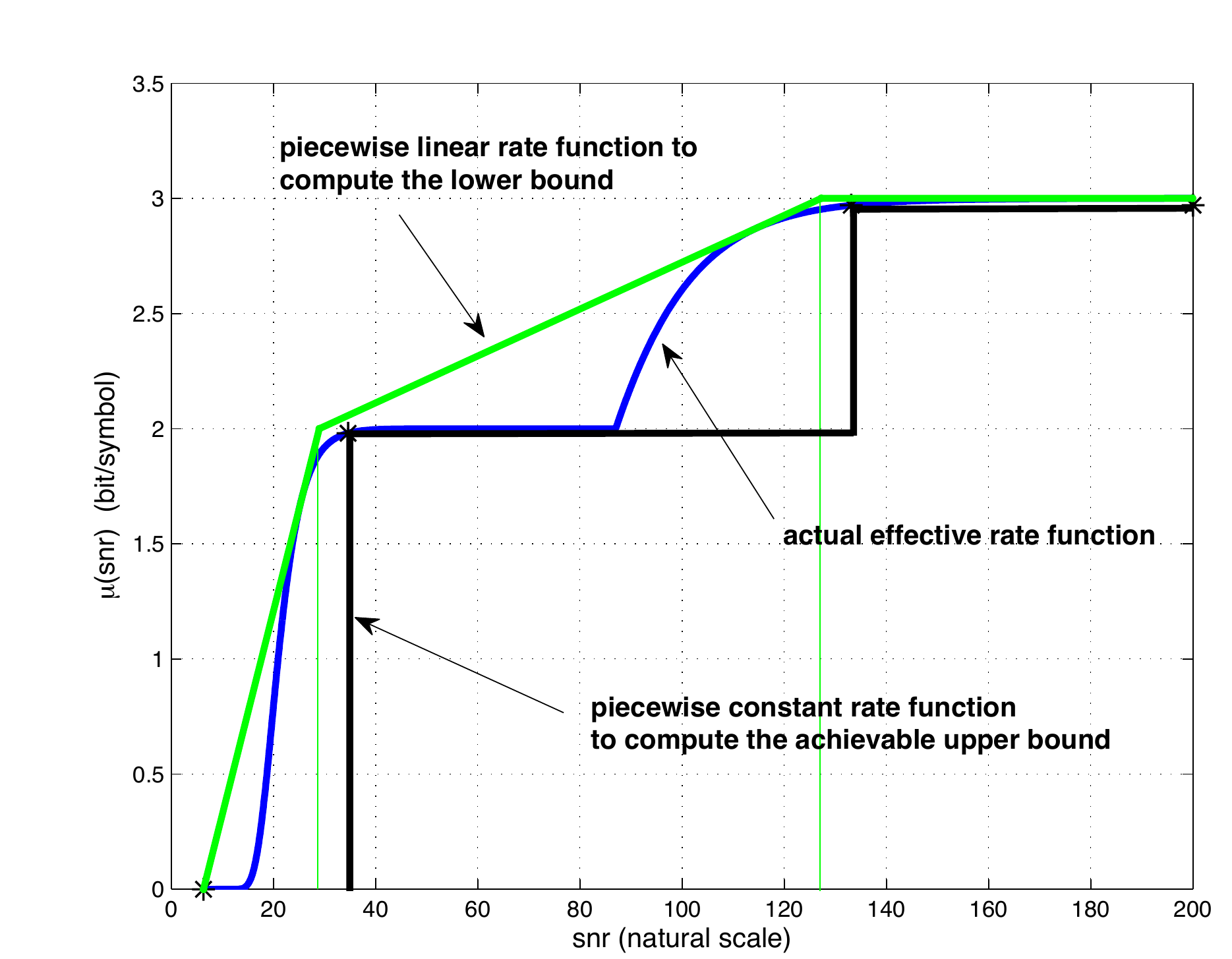}}
\caption{Effective rate function $\mu(\SNR)$, the piecewise constant function (maximum of indicator functions) 
corresponding to the computation of the energy function achievable upper bound $\overline{\Phi}(\lambdav)$ and the piecewise 
linear function $\widetilde{\mu}(\SNR)$ corresponding to the computation of the lower bound $\underline{\Phi}(\lambdav)$, 
for the case of Bluetooth with modes 2dh3 and 3dh3, with $\SNR_0 = 8$ dB. In this chart, the SNR axis is in linear scale in order to 
appreciate the actual shape of the curves appearing in the optimization problems.}
\label{PERfig-example}
\end{figure}

\begin{figure}[ht]
\centerline{\includegraphics[width=14cm]{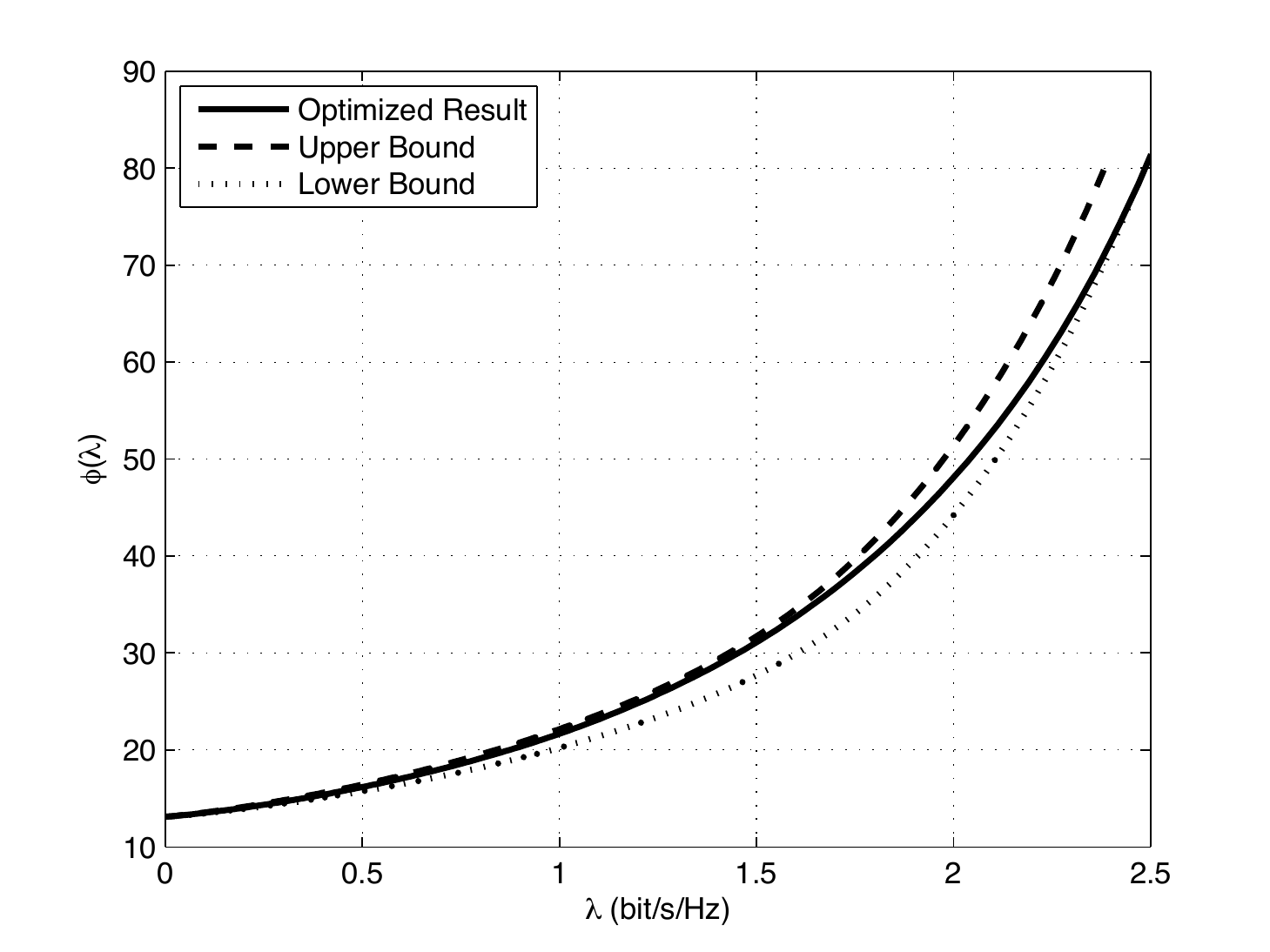}}
\caption{Functions $\underline{\Phi}(\lambda)$ and $\overline{\Phi}(\lambda)$ (shown in natural scale, not in dB) 
for the case of $K = 1$, with  truncated Ricean channel state statistics, and rates corresponding to Fig.~\ref{PERfig-example}.}
\label{bounds}
\end{figure}

\begin{figure}[ht]
\centerline{\includegraphics[width=12cm]{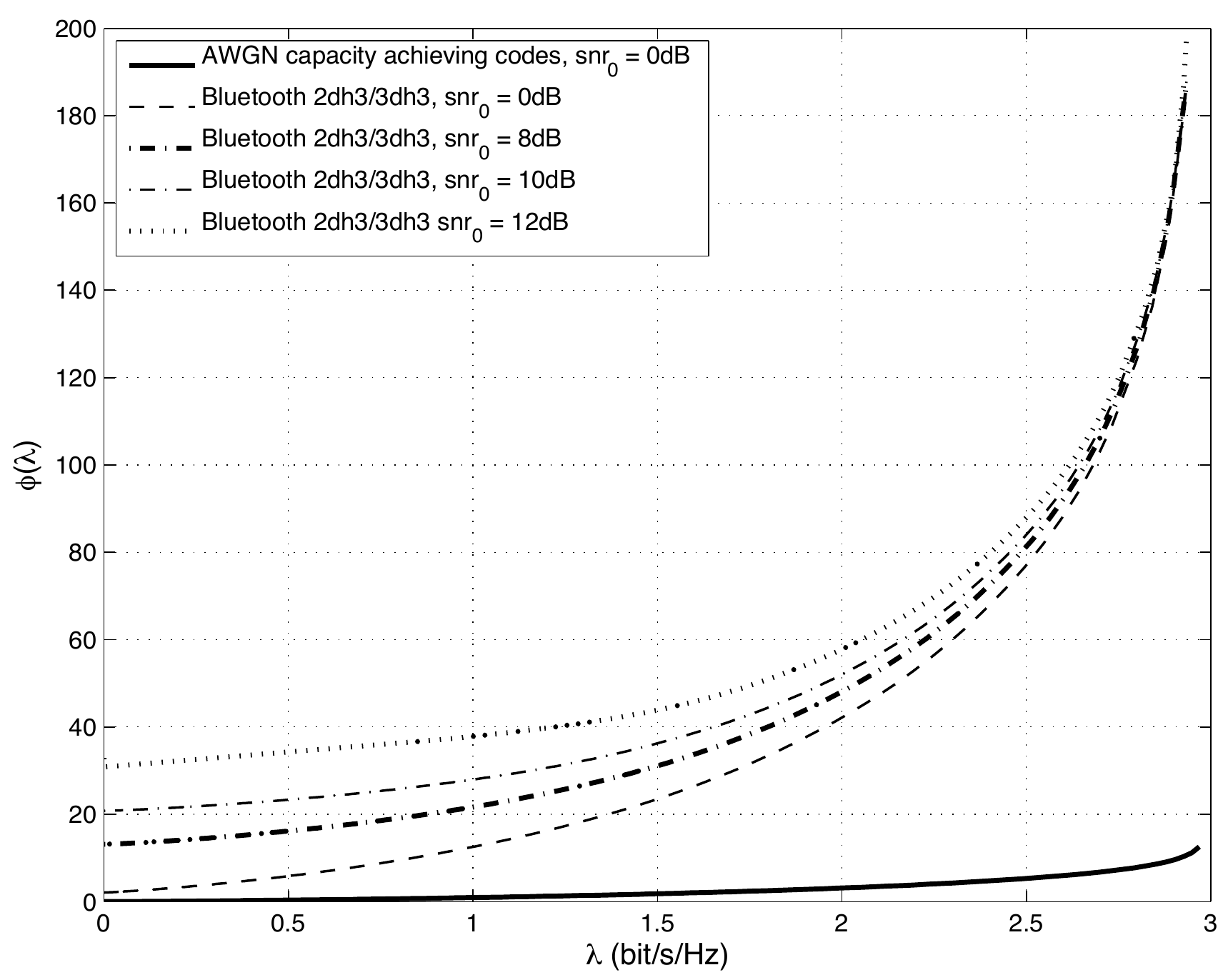}}
\caption{Average sum energy function $\overline{\Phi}(\lambda)$ versus arrival rate $\lambda$ for different $\SNR_0$ values for the 
Bluetooth-like protocol with PHY modes 2dh3 and 3dh3 with $K = 1$ sensors. The corresponding minimum energy function $\Phi(\lambda)$ for an 
ideal system with capacity-achieving codes of rate 2 and 3 bit/s/Hz, and  NULL packet SNR 
requirement $\SNR_0 = 0$ dB is given for comparison.}
\label{sec5}
\end{figure}

\begin{figure}[ht]
\centerline{\includegraphics[width=12cm]{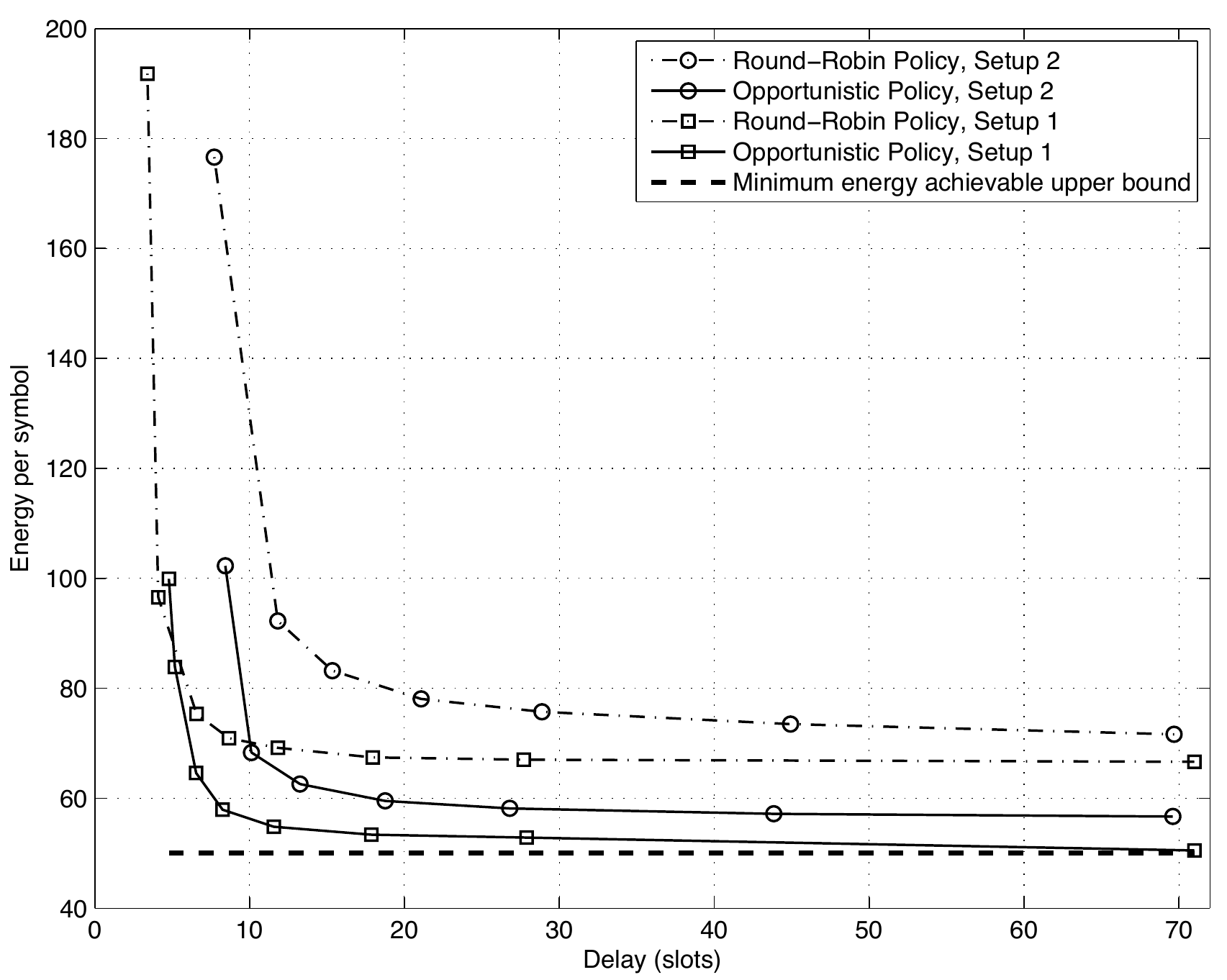}}
\caption{Comparison of round-robin vs. multiuser dynamic scheduling policy (opportunistic) for a piconet with $K = 2$ sensors, with symmetric arrival rates 
$\lambda_1 = \lambda_2 =1$ bit/s/Hz, and $\SNR_0 = 8$ dB, under different arrival statistics. 
Setup 1 considers $q_1 = 1, q_2 = 1$, whereas Setup 2 considers $q_1 = 0.2, q_2 = 1$.
The horizontal line corresponds to the achievable upper bound to the minimum energy function $\overline{\Phi}(\lambdav)$.}
\label{opportunistic}
\end{figure}

\begin{figure}[ht]
\centerline{\includegraphics[width=12cm]{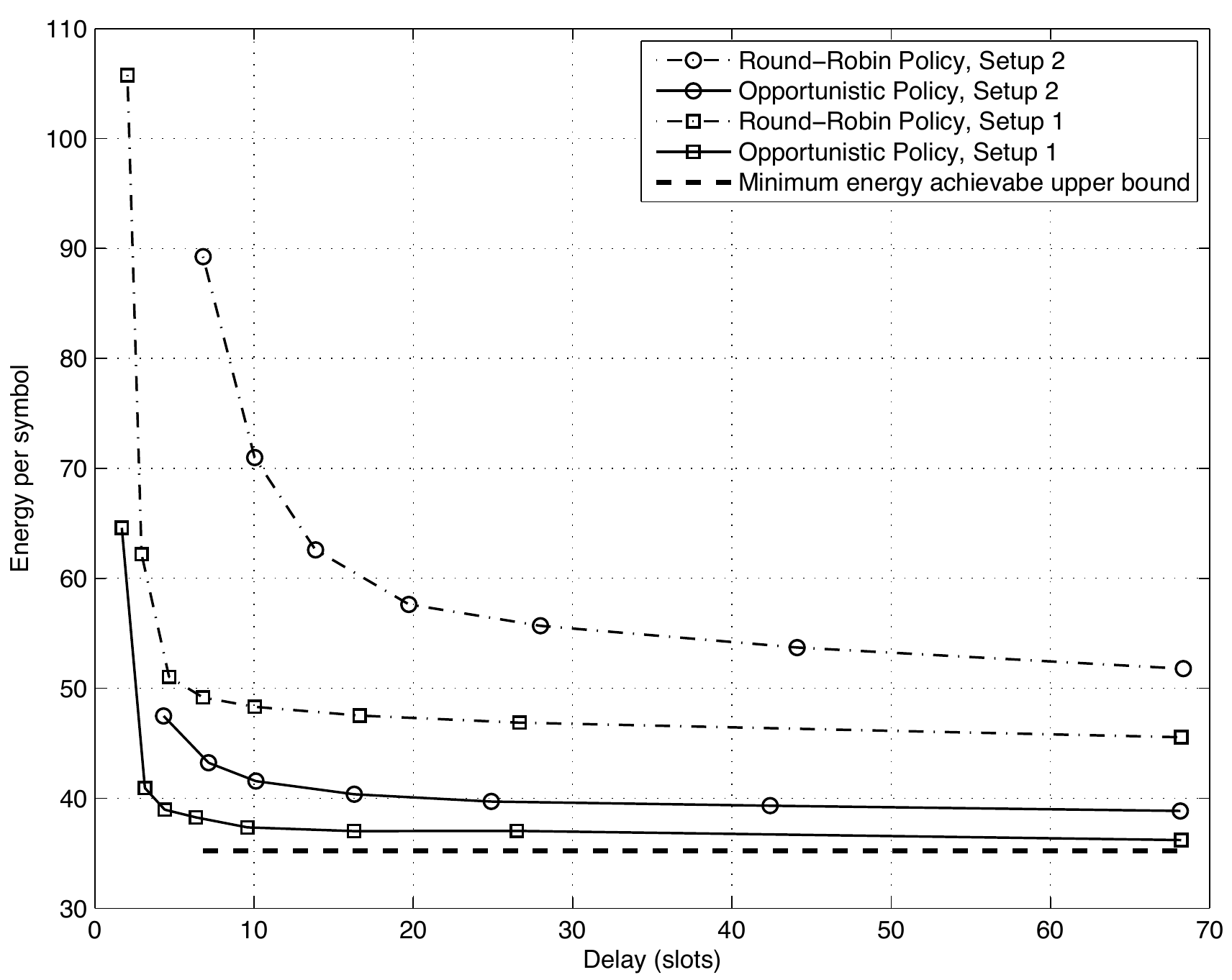}}
\caption{Comparison of round-robin vs. multiuser dynamic scheduling policy (opportunistic)   
with asymmetric arrival rates $\lambda_1 = 0.04$ bit/s/Hz, $\lambda_2=1$ bit/s/Hz.
All other parameters are as in Fig.~\ref{opportunistic}.
The horizontal line corresponds to the achievable upper bound to the minimum energy function $\overline{\Phi}(\lambdav)$.}
\label{opportunistic2}
\end{figure}

\begin{figure}[ht]
\centerline{\includegraphics[width=12cm]{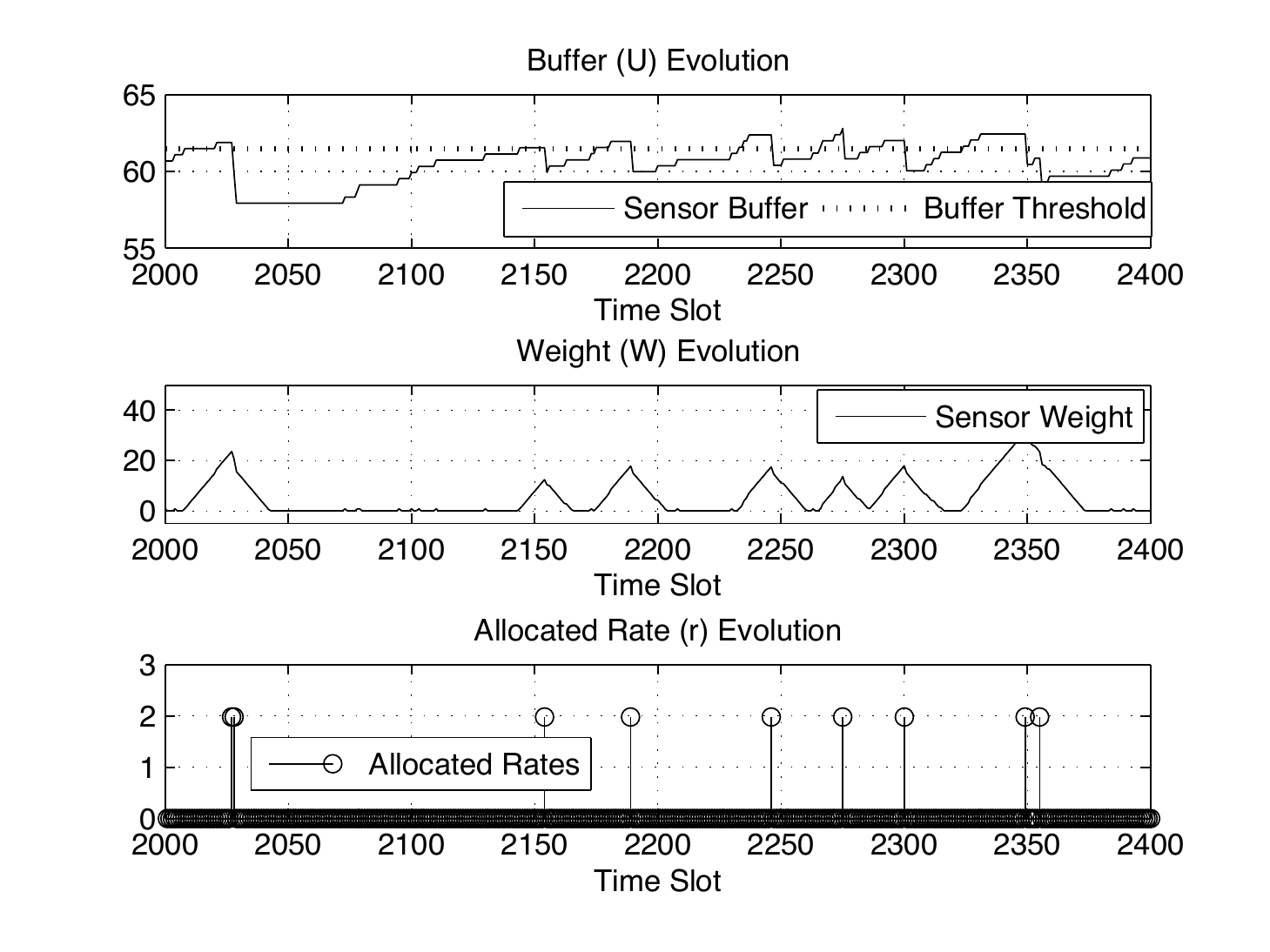}}
\caption{Snapshot of the evolution of the buffer $Q(t)$, scheduler weight $W(t)$ 
and scheduled rate for a bursty arrival process with $\lambda = 0.04$ bit/s/Hz and $q = 0.1$.}
\label{kazzi}
\end{figure}

\begin{figure}[ht]
\centerline{\includegraphics[width=14cm]{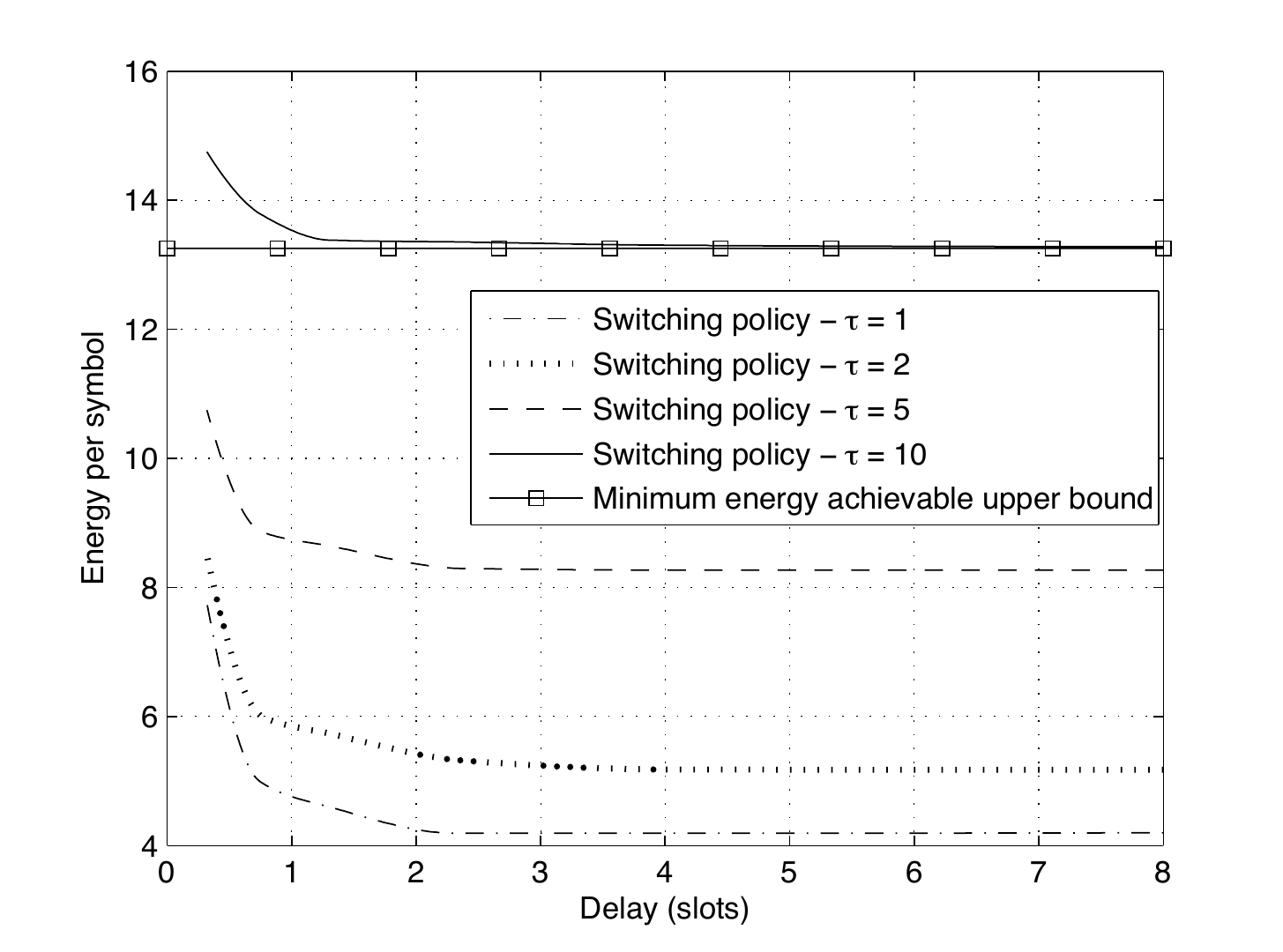}}
\caption{Average energy/delay tradeoff achieved by the improved scheduling scheme with adaptive switching to sleeping mode and 
a bursty arrival process with $\lambda = 0.04$ bit/s/Hz, $q=0.1$ and $\SNR_0=8$ dB, for different $\tau$ factors. 
The case $\tau = 10$ 
corresponds to  the classical non-switching policy.}
\label{switch-fig}
\end{figure}

\begin{figure}[ht]
\centerline{\includegraphics[width=12cm]{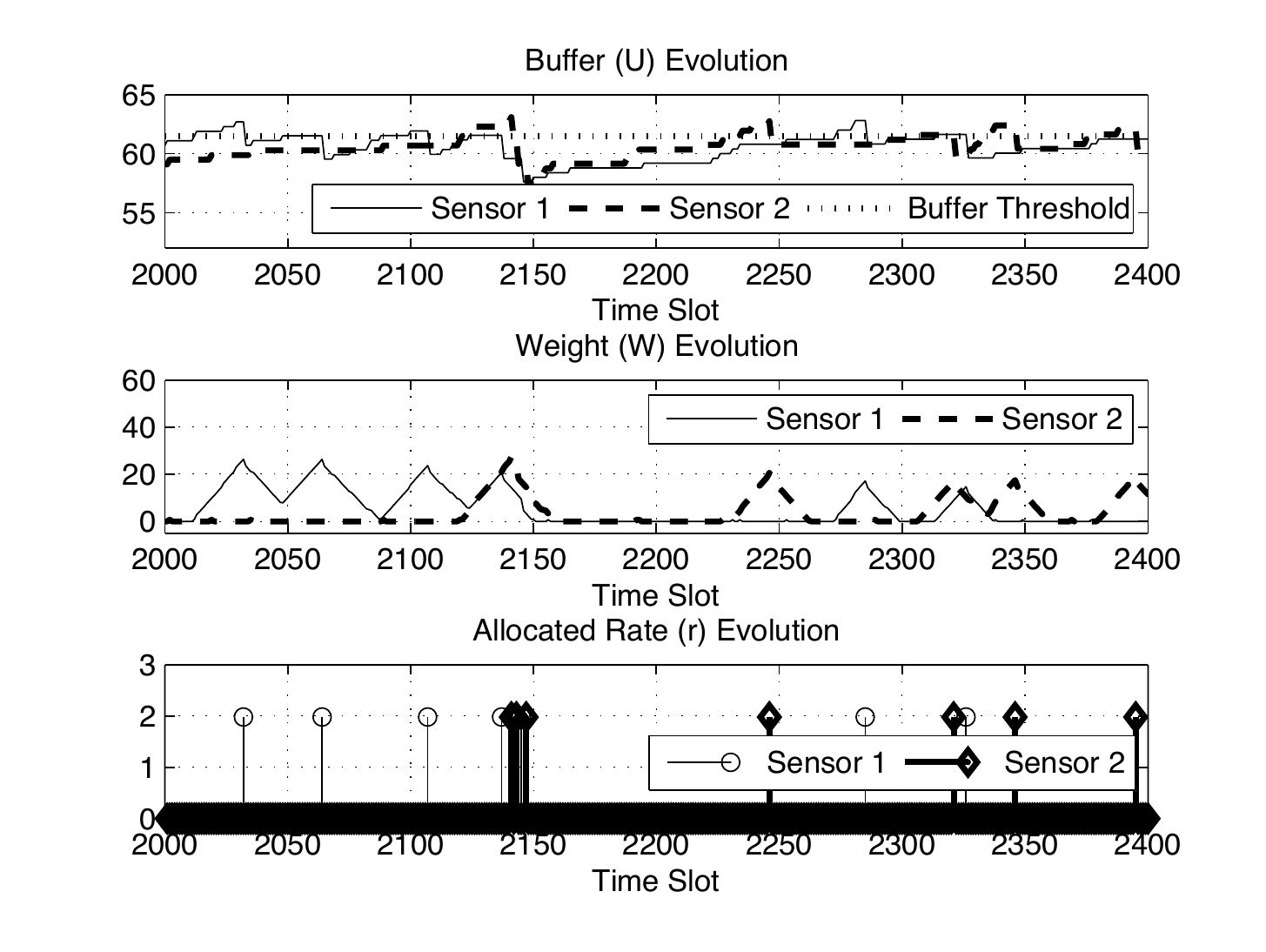}}
\caption{Snapshot of the evolution of the buffers $Q_k(t)$, scheduler weights $W_k(t)$ 
and scheduled rates for $K = 2$, independent bursty arrival processes with 
$\lambda_1 = \lambda_2 = 0.04$ bit/s/Hz and $q_1 = q_2 = 0.1$.}
\label{kazzi2}
\end{figure}

\begin{figure}[ht]
\centerline{\includegraphics[width=12cm]{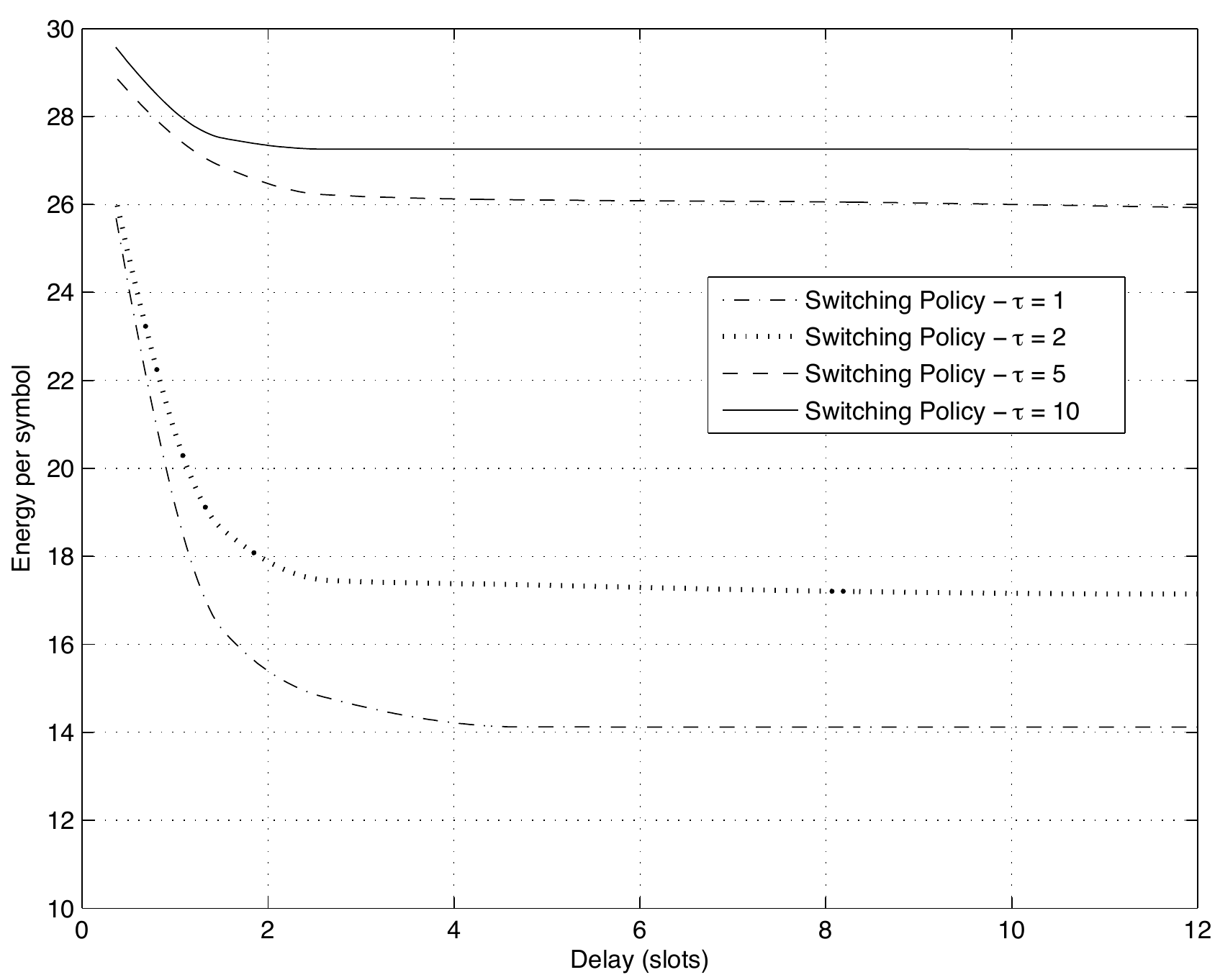}}
\caption{Average energy/delay tradeoff achieved by the improved scheduling scheme with adaptive switching to sleeping mode 
with $K = 2$,  independent bursty arrival processes with $\lambda_1 = \lambda_2 = 0.04$ bit/s/Hz, $q_1 = q_2 = 0.1$
and $\SNR_0=8$ dB. The case $\tau = 10$ corresponds to  the classical non-switching policy.}
\label{switch-fig2}
\end{figure}

\begin{figure}[ht]
\centerline{\includegraphics[width=12cm]{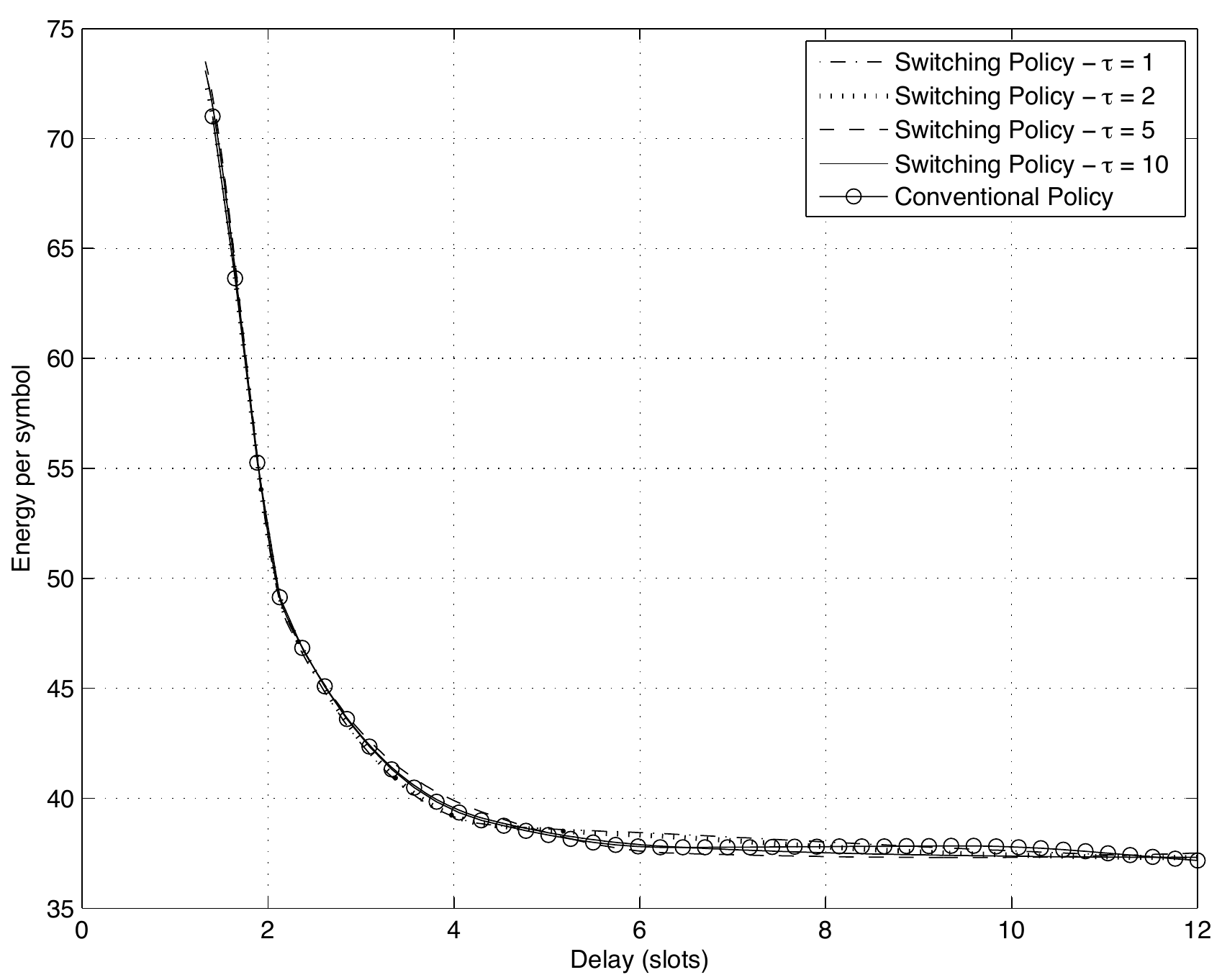}}
\caption{Average energy/delay tradeoff achieved by the improved scheduling scheme with adaptive switching to sleeping mode 
for $K = 2$,  independent arrival processes with $\lambda_1 = 0.04$ bit/s/Hz, $q_1 = 0.1$, 
$\lambda_2 = 1.0$ bit/s/Hz and $q_2 = 1.0$  and $\SNR_0=8$ dB.}
\label{switch-fig3}
\end{figure}

\end{document}